\let\it=\itshape          
\let\bf=\bfseries
\newcommand{\Fig}[1]{Fig.~\ref{#1}}
\newtheorem{lemma}{Lemma}
\newtheorem{definition}{Definition}
\newtheorem{corollary}{Corollary}
\newtheorem{theorem}{Theorem}
\newenvironment{proof}{
   {\bf Proof}}{\hbox{\ }\hfill$|||$
}
\newcommand{\gs}{\sigma}
\newcommand{\lin}{\pmb{\ell}}
\newcommand{\drv}{\ell}
\newcommand{\pt}{\mathbf{p}}
\newcommand{\tn}{\mathbf{t}}
\newcommand{\al}[2]{a^{#1}_{#2}}
\newcommand{\alq}[2]{a_{#2,#1}}
\def\R{\mathbb{R}}
\def\ii{k}
\def\IL{{\it left}}
\def\IM{{\it middle}}
\def\IR{{\it right}}
\def\nor{\mathbf{n}}
\def\B{\mathbf{b}}
\def\pat#1{\B^{#1}}
\def\CAGDfig{.}
\newcommand{\val}{n}
\newcommand{\bv}{\mathbf{v}}
\newcommand{\numer}{\beta}
\newcommand{\crv}{\mathbf{c}}
\newcommand{\denom}{\gamma}
\newcommand{\dg}[1]{\text{deg}(#1)}
\newcommand{\genspl}{generalized spline}
\newcommand{\nurbs}{spline}
\newcommand{\geo}{geometric design}
\newcommand{\kl}{knot line}
\newcommand{\ik}{edge knot}
\newcommand{\ibv}{edge vertex}
\newcommand{\vloc}{vertex-localized}
\title{On the Complexity of Smooth Spline Surfaces from Quad Meshes}
\author{ J\"org Peters and Jianhua Fan
        \\
        University of Florida
        }
\begin{document}
\maketitle
\begin{abstract}
This paper derives strong relations that \emph{boundary curves} of
a smooth complex of patches
have to obey when the patches are computed by local averaging.
These relations restrict the choice of reparameterizations
for geometric continuity.
%A classical task of \geo\ is to
%determine smoothly-connected tensor-product B-spline patches  
%approximating a quadrilateral mesh.

In particular,
when one bicubic tensor-product B-spline patch is associated
with each facet of a quadrilateral mesh 
with $n$-valent vertices and we do not want segments of
the boundary curves forced to be linear,
then the relations dictate the minimal number and 
multiplicity of knots: For general data, the tensor-product spline patches
must have at least two internal double knots per edge to be able to 
model a $G^1$-conneced complex of $C^1$ splines.
This lower bound on the complexity of any construction
is proven to be sharp by suitably interpreting an existing 
surface construction.
That is, we have a tight bound on the complexity of smoothing quad meshes
with bicubic tensor-product B-spline patches.
%
%\begin{classification} % according to http://www.acm.org/class/1998/
%\CCScat{Computer Graphics}{I.3.5}{Computational Geometry and Object Modeling}{Curve, surface, solid, and object representations}
%\end{classification}
\end{abstract}
\section{Introduction}
\label{sec:Introduction}
Even though every newly proposed smooth surface construction
seeks to be optimal in some aspect, the overall theory of 
smooth surface constructions offers few sharp
lower bounds, i.e.\ proofs that no polynomial construction 
of lower degree is possible and that a construction of this
least degree exists so that upper bound and lower bound match.
One well-appreciated bound is the degree-6 bound for
$C^2$ subdivision surfaces derived by Reif and Prautzsch 
\cite{journals/cagd/Prautzsch97} and shown to be sharp,
for example by \cite{Reif:1998:turbs,Prautzsch-Reif:1998}.
Such sharp bounds allow us to \\
--- understand the fundamental difficulty of the task, and to \\
--- guide future research by showing where research is futile \\
--- and what assumptions must be side-stepped to derive substantially new
results.
%(such as, for example,
%increasing the subdivision density or changing
%the patch layout to yield $C^2$ subdivision of degree 3 \cite{sig}).

We are motivated by a standard task of \geo:
%computer graphics and geometric design:
to determine $G^1$-connected tensor-product B-spline patches  
approximating a quadrilateral mesh whose vertices can have any fixed valence.
While this challenge can be met by recursive subdivision \cite{Catmull-1978-CC},
representing the surface with a finite small number of patches
defined by the quad and its neighbors is often preferable, for example
to parallelize the construction (see e.g.\ \cite{Loop:2008:ACC,Myles:2008:GCQ}).
This raises the question: (Q)
\emph{what is the simplest structure} (in distribution and number of knots)
\emph{of degree bi-3 \nurbs\ patches that allow a quad mesh
%including extraordinary quads
to be converted by localized operations
into a smooth surface with one \nurbs\ patch per quad?}
Surprisingly, this basic question at the heart of a classical task
of \geo\ has not been settled to date.

To frame the question, Section \ref{sec:g1} takes a more general view.
We do not constrain the domain to be a collection of
quadrilaterals or the functions to be polynomial splines.
Also, the relations in
Lemmas \ref{lem:n4xform}, \ref{lem:c1spline} and \ref{lem:c2spline} 
do not depend on locality of the construction
but apply to any collection of sufficiently smooth patches
coming together with a logically symmetric $G^1$ join:
$
   \partial_2\pat{\ii}(u,0)
   +\partial_1\pat{\ii-1}(0,u)
   =
   \alpha^{\ii}(u)
   \partial_1 \pat{\ii}(u,0)
$ (see Definition \ref{def:g1}, page \pageref{def:g1}). 
%independent of polynomial degree and even polynomiality and
%they result only from enforcing unbiased, logically symmetric $G^1$ 
%constraints between patches and the patch connectivity.
Adding locality of operations as a requirement
in Section \ref{subsec:local} then rules out everywhere (piecewise) linear
$\alpha^{\ii}$, still in the very general setting.
%$G^1$ reparameterizations.

In Section \ref{sec:lower}, we specialize the setting
to polynomial tensor-product splines of degree bi-3.
For these, we obtain a lower bound on the number and
multiplicity of knots. We prove that 
at least two internal double knots are required per edge 
%and at least one must be a double knot
to admit a local construction.
This lower bound is tight, because the recently-published
construction for smooth surfaces \cite{Fan:2008:SBS}
can be re-interpreted as a spline construction with exactly
two internal double knots.
%(and for one internal double knot and several single knots).
%We give an upper bound on the number of knots
Together, the lower and upper bound conclusively settle the question Q.
%result 

\subsection{Bi-3 constructions in the literature}
%Prompted by the impending ability of GPUs to tessellate and adaptively evaluate
%finitely patched polynomial surface at animation speeds, there have recently
%There have been a number of publications close to this problem.
Creating $C^1$ surfaces with a finite number of patches of degree  bi-3,
i.e.\ generalizing standard tensor-product B-splines to 
smooth surfaces from arbitrary manifold quad meshes,
%with first-order differentiability,
is a classic challenge of CAGD 
(see e.g.\ \cite{bezier77a,Wijk:1986:BPA,Peters:1991:SIM}).
The assumption that a simple construction with a finite number of patches 
is not possible motivates the classic Catmull-Clark subdivision
(\Fig{fig:pccm}, \IL).
PCCM \cite{PCCM} is a finite construction that approximates
Catmull-Clark limit surfaces with smoothly connected bi-3 patches.
PCCM requires up to two steps of Catmull-Clark subdivision to separate
non-4-valent vertices. This proves that a $4\times4$ arrangement of 
polynomial patches per quad suffices in principle,
corresponding to two double interior knots and one single knot
(\Fig{fig:pccm}, \IM), 
However, PCCM can have poor shape for certain higher-order saddles 
(\Fig{fig:defect}, \cite{url:saddle,Peters:2000:ModPCC,Loop:2008:ACC}).
\def\wid{\linewidth}
\begin{figure}[h]
    \centering
    \psfrag{CC}{Catmull-Clark}
    \psfrag{PCCM}{PCCM}
    \psfrag{Fan}{\cite{Fan:2008:SBS}}
    \epsfig{file=\CAGDfig/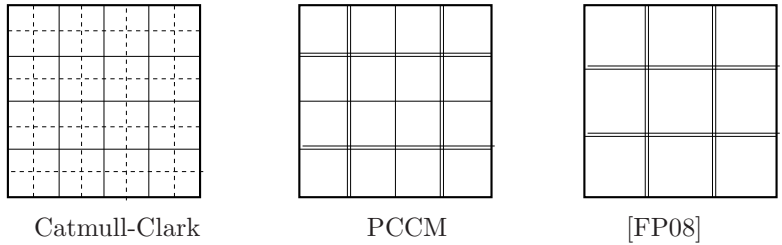,width=0.85\wid}
    \caption{
    {\bf Knot distribution.} A quadrilateral piece generated by  Catmull-Clark 
    subdivision has (infinitely many) single knots,
    a piece of PCCM requires two double and at least one more single knot,
    and the construction \cite{Fan:2008:SBS} has two double interior knots
    (which this paper shows to be the minimal number of knots).
    }
    \label{fig:pccm}
\end{figure}
More recently, a number of papers appeared that are 
%a number of publications have revived 
also predicated on the assumption that a simple construction
with a finite number of patches is not possible.
Shi et al. \cite{journals/cagd/ShiWWL04,shi2006rcg} propose 
a subdivision-like refinement approach  with %$C^2$
bi-3 tensor-product patches to obtain $C^0$ surfaces 
where ever more single knots are inserted.
They correctly surmise that, in general, no finite
$C^1$ construction with $C^2$ tensor-product splines of degree
bi-3 is possible (see Theorem \ref{thm:twodblknt} of our paper).
%prompted by the impending ability of GPUs to tessellate and adaptively
%evaluate patched polynomial surfaces at animation speeds.
At the other extreme, using a single patch per quad,
Loop and Schaefer \cite{Loop:2008:ACC} propose a bi-3 $C^0$ surface
construction with separate tangent patches to convey an impression of
smoothness as in \cite{Vlachos:2000:CPNT}, while
%when applying lighting, a technique proposed
%for computer graphics in \cite{Vlachos:2000:CPNT}.
%Again, the results in this paper justifies the approach 
%if one restricts each patch to be a cubic polynomial.
% journals/cad/ShiWY04  bi-5
%For the same reason, 
Myles \emph{et al.} \cite{Myles:2008:GCQ} perturb a
bi-3 base patch near non-4-valent vertices to
%coefficients of a bi-5 patch to
obtain a $C^1$ surface of degree bi-5 for CAD applications.
%with one bi-5 patch per quad.
%\cite{Peters:Tecnical}.
Hahmann et al. \cite{conf/gmp/HahmannBC08} propose
a $2\times2$ macro-patch per quad; and
Fan and Peters \cite{Fan:2008:SBS} present an algorithm
that constructs smoothly connected B\'ezier patches of 
degree bi-3 whose internal transitions 
allow re-interpretion 
as one tensor-product spline patch per quad
with two internal double knots (\Fig{fig:pccm}, \IR, 
Corollary \ref{cor:opt}).
We will see that this is indeed
the minimal number and multiplicity of knots for 
the standard Catmull-Clark layout of patches.
%without this shape problem;
%and discuss why that is necessary for general quad meshes.
The structurally different polar layout
allows collapsed bi-3 spline patches with single internal knots
to complete a $C^1$ surface \cite{Myles:2007:ECC}.

\section{Unbiased $G^1$ constraints}
\label{sec:g1}
We consider $\val$ parameterically $C^1$ patches 
\begin{equation}
   \pat{\ii} :  \Box \subsetneq \R^2 \to \R^3,\qquad
   \ii=1,\ldots,\val
\end{equation}
meeting at a central point $\pat{\ii}(0,0)=\pt$
such that $\pat{\ii}(u,0)= \pat{\ii-1}(0,u)$
(see \Fig{fig:crvnet}).
%, i.e.\
%such that consecutive ones meet along the $\ii$th boundary curve,
%such that their domain parameters are associated with 
%the counterclockwise traversal of their domains.
%Although we used the symbol $\Box$ to define the domains,
We do not (yet) assume that $\Box$ is the unit square
but just that the origin is a corner of the domain $\Box$
and that two edges emanate from it in independent directions.
We also assume that the patches are not 
singular at the origin in the sense that 
$\partial_2\pat{\ii}(0,0) \times \partial_1 \pat{\ii}(0,0) \ne 0$
where $\partial_\ell$ 
denotes differentiation with respect to the $\ell$th argument.

\def\wid{\linewidth}
\begin{figure}[h]
    \hskip1cm
    \psfrag{u}{$_1$}
    \psfrag{v}{$_2$}
    \psfrag{ci}{$\pat{\ii}(u,0)= \pat{\ii-1}(0,u)$}
    \psfrag{xi}{$\pat{\ii-1}$}
    \psfrag{xp}{$\pat{\ii}$}
    \psfrag{p}[r]{$\pt$}
    \epsfig{file=\CAGDfig/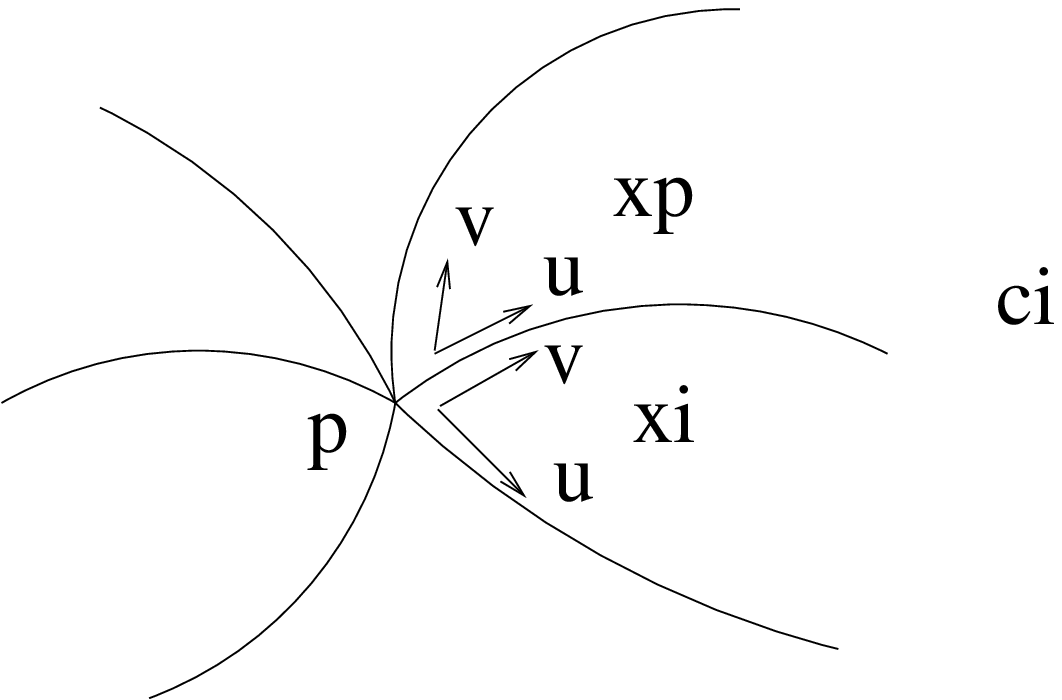,width=0.8\wid}
    \caption{
    {\bf Indexing and parameterization} of adjacent patches
    at a vertex of valence $\val$
    (if $\ii=1$ then $\pat{\ii-1}=\pat{\val}$), illustrating 
    the $G^1$ constraints \eqref{eq:g1}
    .
    }
    \label{fig:crvnet}
\end{figure}
To make the $\val$ patches form a $C^1$ surface, we want to enforce 
logically symmetric (unbiased) $G^1$ constraints.
(We will discuss the general case in Section \ref{sec:discuss}.)
\begin{definition}[Unbiased $G^1$ constraints]
With $\alpha^{\ii}: \R \to \R$ a sufficiently smooth,
univariate scalar-valued function,
the \emph{unbiased} $G^1$ constraints between consecutive patches are
\begin{equation}
   \partial_2\pat{\ii}(u,0)
   +\partial_1\pat{\ii-1}(0,u)
   =
   \alpha^{\ii}(u)
   \partial_1 \pat{\ii}(u,0)
   .
   \label{eq:g1}
\end{equation}
If $\alpha^{\ii}\equiv 0$, 
the constraints enforce \emph{parametric $C^1$ continuity}. \\
We abbreviate 
%the $\drv$th derivative of $\alpha^\ii$i
\begin{equation}
   \al{\ii}{\drv} \in \R,
   \qquad
   \text{ the $\drv$th derivative of }  \alpha^\ii
   \text{ evaluated at }  0 
\end{equation}
and 
\begin{equation}
   \tn^\ii := \partial_1 \pat{\ii}(0,0)  \in \R^3
\end{equation}
so that relation \eqref{eq:g1} becomes at $(0,0)$
\begin{equation*}
\raisebox{0.9cm}{
$
    \tn^{\ii+1}+\tn^{\ii-1} = \al{\ii}{0}\tn^{\ii}
    .
   \quad
   \eqref{eq:g1}_{u=0}$}
    \hskip1cm
    \psfrag{a0k}{$\al{\ii}{0}$}
    \psfrag{bm}{$\tn^{\ii+1}$}
    \psfrag{bu}[l]{$\tn^{\ii}$}
    %\psfrag{bu}{$\tn^{\ii}$}
    %\psfrag{bu}[t]{$\tn^{\ii}$}
    \psfrag{bp}{$\tn^{\ii-1}$}
    \psfrag{eqn}[c]{}
    %\psfrag{eqn}[c]{$\tn^{\ii+1}+\tn^{\ii-1} = \al{\ii}{0}\tn^{\ii}$}
    \epsfig{file=\CAGDfig/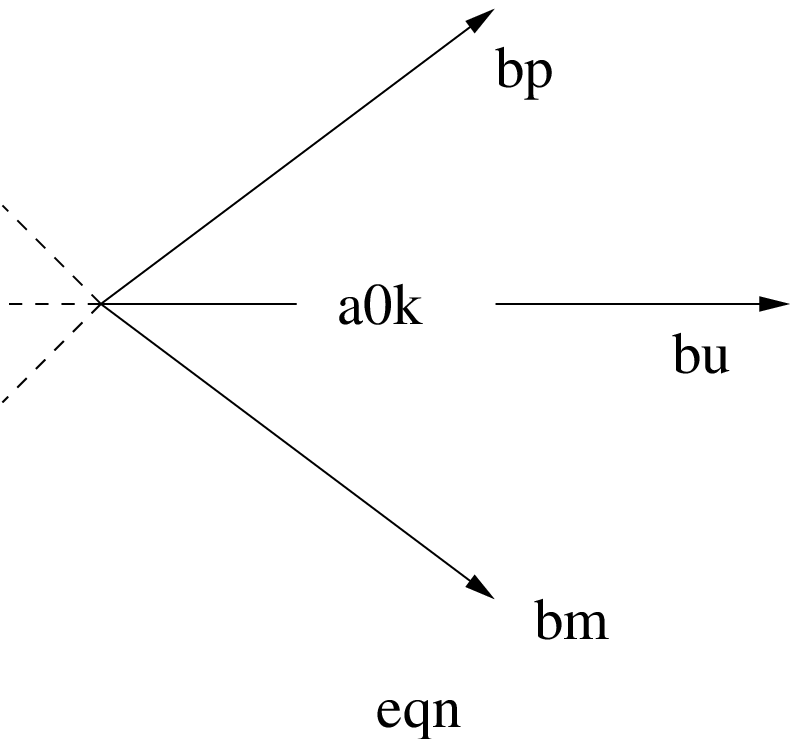,width=0.25\wid}
   %\hskip4cm
\end{equation*}
\label{def:g1}
\end{definition}
\vskip-.5cm
That is, superscripts count sectors (modulo $\val$) surrounding
$(0,0)$ while subscripts indicate derivatives. 
Later, starting with \eqref{eq:alphajell}, we will use a second subscript
(and remove the superscript) to denote pieces of $\alpha^\ii$.

\medskip
We now add the assumption that each $\pat{\ii}$ is twice 
continuously differentiable at $(0,0)$
(as are the polynomial pieces of a \nurbs\ patch).
In reference to the main application,
we will call such smooth functions \genspl s.

\begin{definition}[Knot lines and \genspl s]
A $C^s$ \emph{\genspl} patch is a map %from a polygonal disk $\Box$
$\pat{\ii}: \Box \subsetneq \R^2 \to \R^3$
that is $s$ times continuously differentiable.
The set of \emph{\kl s} of $\pat{\ii}$ is a
finite collection of lines in $\Box$ such that at most two distinct
lines cross. Boundary edges of $\Box$ are \kl s.
An intersection of a boundary edge minus its end points
with a non-parallel \kl\ 
%(domain knot) lines where smoothness is least; 
is called an \emph{\ik}.
At every \ik, %of a \genspl\ patch,
on either side of its \kl, 
\begin{align}
   \partial^i_1\partial^j_2\pat{\ii} \text{ is well-defined for } i+j\le s+1
   \quad
   \text{ and } 
   \quad
   \partial_2\pat{\ii} \times \partial_1 \pat{\ii} \ne 0.
\end{align}
\label{def:spline}
\end{definition}

The \genspl\ definition is intentionally broader than its
subclass of polynomial tensor-product splines that motivates it.
It includes, for example, trigonometric splines or subdivision constructions.

For $C^1$ \genspl s, we can then differentiate relation \eqref{eq:g1}
along (the respective domain edge of)
the common boundary $\pat{\ii}(u,0)=\pat{\ii-1}(0,u)$:
\begin{align}
   \notag
   &(\partial_1\partial_2\pat{\ii})(u,0)
   +
   (\partial_2\partial_1\pat{\ii-1})(0,u)
   \\
   &=
   \alpha^\ii(u) \partial^2_1 \pat{\ii}(u,0)
   +
   (\alpha^{\ii})'(u) \partial_1 \pat{\ii}(u,0)
   .
   \label{eq:g11gen}
\end{align}
When we evaluate at $u=0$ then 
%to obtain a relation between the mixed derivatives and the boundary curve
\begin{equation}
   \text{ at } (0,0),\qquad
   \partial_1\partial_2\pat{\ii}
   +
   \partial_2\partial_1\pat{\ii-1}
   =
   \al{\ii}{0} \partial^2_1 \pat{\ii}
   +
   \al{\ii}{1} \partial_1 \pat{\ii}
   .
   \label{eq:g11gen0}
\end{equation}
%where $\dalpha{\ii}$ denotes the derivative of $\alpha^{\ii}$.
If $\val$ is \emph{even} then the alternating sum of the left hand
sides vanishes
\begin{equation}
   \text{ at } (0,0),\qquad
   \sum^\val_{\ii=1} (-1)^\ii
   \bigl(
   \partial_1\partial_2\pat{\ii}
   +
   \partial_2\partial_1\pat{\ii-1}
   \bigr)
   =0 
   \label{eq:g11evenleft}
\end{equation}
and therefore so must the right hand side
\begin{equation}
   \text{ at } (0,0),\qquad
   0 =
   \sum^\val_{\ii=1} (-1)^\ii
   \al{\ii}{0} \partial^2_1 \pat{\ii}
   +
   \sum^\val_{\ii=1} (-1)^\ii
   \al{\ii}{1} \partial_1 \pat{\ii}
   .
   \label{eq:g11evenright}
\end{equation}
In particular, if the patches join smoothly and therefore have a unique
normal $\nor\in \R^3$ at $\pt$ then, with $\cdot$ denoting the scalar product,
\begin{equation}
   \text{ if $\val$ is even, at (0,0) }\qquad
   0 =
   \sum^\val_{\ii=1} (-1)^\ii
   \al{\ii}{0}\ \nor \cdot \partial^2_1 \pat{\ii}
   .
   \label{eq:vertencl}
\end{equation}
This is the \emph{vertex-enclosure constraint}
(see e.g.\ \cite[p.205]{Peters:2002:GC}).
%----------------lemma1----------------------------

We briefly focus on the important generic case where $\val=4$ patches meet.
\begin{definition} [tangent X]
If $\val=4$, 
$\partial_1 \pat{1}(0,0)=-\partial_1 \pat{3}(0,0)$ 
and $\partial_1 \pat{2}(0,0)=-\partial_1 \pat{4}(0,0)$
then the tangents form an X.
\end{definition}

\begin{lemma} [X tangent]
If the tangents form an X, then \\
$\al{1}{1}=\al{3}{1} \text{ and } \al{2}{1}=\al{4}{1}. $
\label{lem:n4xform}
\end{lemma}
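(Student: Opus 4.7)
The plan is to combine the pointwise form $\eqref{eq:g1}_{u=0}$ with the even-valence identity \eqref{eq:g11evenright}, both of which are already available for $\val=4$.

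First I would show that the X hypothesis forces every $\al{\ii}{0}$ to vanish. Applied at $\ii=1$, the relation $\tn^{\ii+1}+\tn^{\ii-1}=\al{\ii}{0}\tn^{\ii}$ reads $\tn^{2}+\tn^{4}=\al{1}{0}\tn^{1}$, but by assumption $\tn^{2}+\tn^{4}=0$, and the non-singularity requirement gives $\tn^{1}\ne 0$, so $\al{1}{0}=0$. The same computation at $\ii=2,3,4$ (using $\tn^{3}+\tn^{1}=0$ and the $X$ relations) produces $\al{2}{0}=\al{3}{0}=\al{4}{0}=0$.

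Next, substituting these zeros into \eqref{eq:g11evenright} eliminates the second-derivative sum entirely and leaves the $\R^3$-valued equation
\begin{equation*}
   0 \;=\; \sum_{\ii=1}^{4}(-1)^{\ii}\al{\ii}{1}\tn^{\ii}
   \;=\;-\al{1}{1}\tn^{1}+\al{2}{1}\tn^{2}-\al{3}{1}\tn^{3}+\al{4}{1}\tn^{4}.
\end{equation*}
Replacing $\tn^{3}=-\tn^{1}$ and $\tn^{4}=-\tn^{2}$ by the X hypothesis regroups this into $(\al{3}{1}-\al{1}{1})\tn^{1}+(\al{2}{1}-\al{4}{1})\tn^{2}=0$.

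To conclude, I would invoke the non-singularity assumption one more time: since $\partial_{2}\pat{2}(0,0)=\tn^{1}$ and $\partial_{1}\pat{2}(0,0)=\tn^{2}$ have nonzero cross product, $\tn^{1}$ and $\tn^{2}$ are linearly independent in $\R^3$, so their scalar coefficients must each vanish, giving $\al{1}{1}=\al{3}{1}$ and $\al{2}{1}=\al{4}{1}$. The only step that needs care is bookkeeping the indices modulo $\val=4$ and remembering that \eqref{eq:g11evenright} is a vector equation (so we do not need to project onto the normal $\nor$ as in \eqref{eq:vertencl}); everything else is a direct substitution into machinery already established in the section.
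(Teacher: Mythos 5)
Your proof is correct and follows essentially the same route as the paper's: deduce $\al{\ii}{0}=0$ from $\eqref{eq:g1}_{u=0}$ and the X hypothesis, substitute into \eqref{eq:g11evenright}, regroup using $\tn^{3}=-\tn^{1}$ and $\tn^{4}=-\tn^{2}$, and conclude from the linear independence of $\tn^{1}$ and $\tn^{2}$ (the paper's ``regular at corners''). The only nit is that $\partial_{2}\pat{2}(0,0)=\tn^{3}=-\tn^{1}$ rather than $\tn^{1}$ (one could more simply use $\partial_{2}\pat{1}(0,0)=\tn^{2}$ and $\partial_{1}\pat{1}(0,0)=\tn^{1}$), but this sign does not affect the linear-independence conclusion.
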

\begin{proof}
If the tangents form an X then $\val=4$ and $\al{\ii}{0}=0$,
$\ii=1,2,3,4$ so that \eqref{eq:g11evenright} simplifies to 
\begin{align}
   %&\text{ if $\val=4$ and } \alpha^\ii(0)=0, \ii=1,2,3,4,
   \label{eq:2x2vert}
   \text{ at (0,0)},\qquad
   &0 =
   (\al{1}{1}-\al{3}{1}) \partial_1 \pat{1}
   -
   (\al{2}{1}-\al{4}{1}) \partial_1 \pat{2}
   .
\end{align}
Since the patches are regular at corners,
%\begin{equation}
   %\text{ if $\val=4$ and } \alpha^\ii(0)=0, \text{ implies }
   %\qquad
   %\al{1}{1}=\al{3}{1}
   %\text{ and }
   %\al{2}{1}=\al{4}{1} 
   %.
   %\label{eq:2x2vertalpha}
%\end{equation}
both summands have to vanish, implying the claim.
\end{proof}

We now consider the unbiased $G^1$ transition between 
two $C^1$ \genspl\ patches. 
We focus on an \emph{\ibv}, the image of an 
\emph{\ik} on the common boundary. 
By definition, an \ibv\ is not an end point of the boundary.
%We call knots on the boundary, not corresponding to the end,
%\emph{interior knots} and the corresponding vertices
%\emph{interior boundary vertices}.
That is, we consider a point where four polynomial 
pieces meet such that $\pat{1}$ and $\pat{2}$ belong to 
one \genspl\ patch and $\pat{3}$ and $\pat{4}$ are adjacent pieces
of the edge-adjacent \genspl\ patch (Figure \ref{fig:spl2}).
Since each \genspl\ patch is internally parameterically $C^1$,
by Definition \ref{def:g1}
\begin{equation}
   \alpha^2\equiv 0 \equiv \alpha^4.
   \label{eq:al0}
\end{equation}
\begin{figure}[h]
    \centering
    \psfrag{b1}[l]{$\pat{1}$}
    \psfrag{b2}{$\pat{2}$}
    \psfrag{b3}{$\pat{3}$}
    \psfrag{b4}[l]{$\pat{4}$}
    \psfrag{a1}{$\alpha^1$}
    \psfrag{a2}{$\alpha^2$}
    \psfrag{a3}{$\alpha^3$}
    \psfrag{s1}{$_1$}
    \psfrag{s2}{$_2$}
    \epsfig{file=\CAGDfig/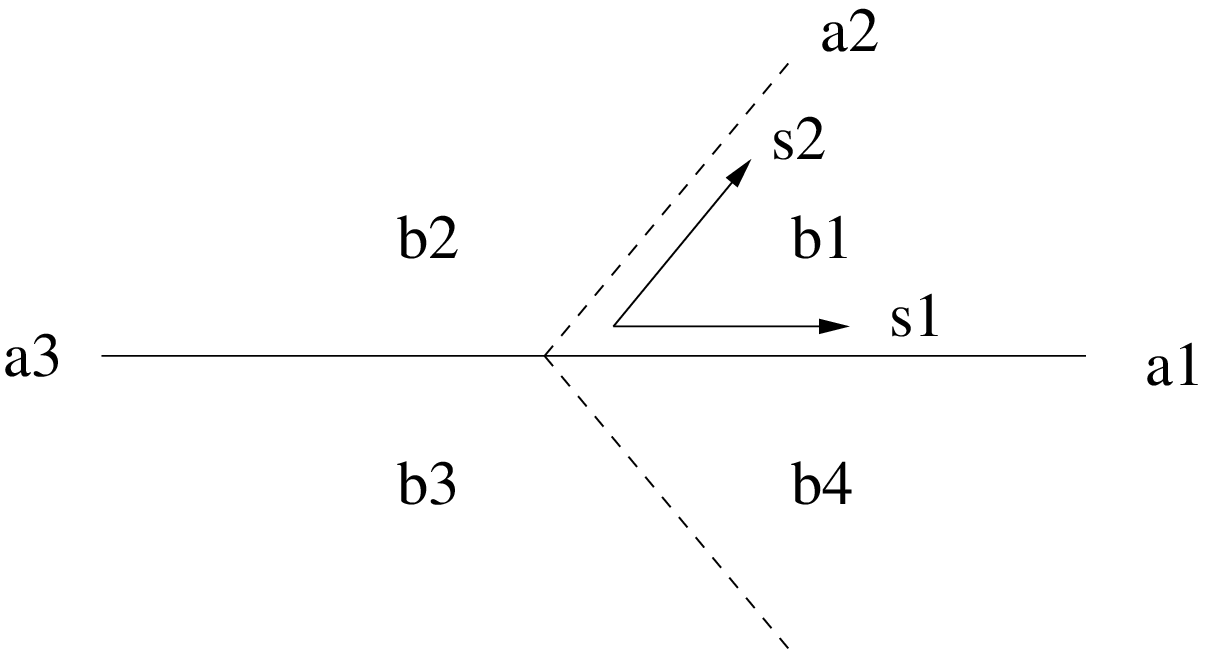,width=0.5\wid}
    \caption{
    Join across an {\bf \ik } on the boundary 
    (solid) between two \genspl s.
    The first \genspl\  has polynomial pieces $\pat{1}$ and $\pat{2}$.
    }
    \label{fig:spl2}
\end{figure}
%---------------------------------Lemma 2-----------------------------
\begin{lemma} [$C^1$ \genspl, \ik]
Let $(0,0)$ be the parameter
associated with an \ik\ on the boundary common to two
$C^1$ \genspl s that are joined by unbiased $G^1$ constraints. Then
\begin{align}
   \al{1}{0} &= -\al{3}{0}, 
   \label{eq:al10}
   \\
   \text{at } (0,0):\
   0 &=
   \al{1}{0} ( \partial^2_1 \pat{1} - \partial^2_1 \pat{3})
   + (\al{1}{1} - \al{3}{1})\tn^1
   .
   \label{eq:dd}
\end{align}
\label{lem:c1spline}
\end{lemma}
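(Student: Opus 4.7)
The plan is to treat the edge knot as a degenerate four-patch configuration at $(0,0)$, where the two internal parametric $C^1$ joins of the generalized splines translate exactly into $\alpha^2\equiv 0\equiv\alpha^4$ as already noted in \eqref{eq:al0}. Equation \eqref{eq:al10} can then be read off from the value of the unbiased $G^1$ constraint at the corner (the identity $\eqref{eq:g1}_{u=0}$), and equation \eqref{eq:dd} will follow from the alternating-sum identity \eqref{eq:g11evenright}, which is already available because the valence is even.

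First I would apply $\eqref{eq:g1}_{u=0}$ to the four indices $\ii=1,2,3,4$. The relations from $\ii=2$ and $\ii=4$ both reduce to $\tn^1+\tn^3=0$, so $\tn^3=-\tn^1$; this is really saying that the two inter-piece tangents along the common spline-spline boundary are anti-parallel, the natural analogue for an edge knot of the X-tangent configuration used in Lemma \ref{lem:n4xform}. The relations from $\ii=1$ and $\ii=3$ then become $\tn^2+\tn^4=\al{1}{0}\tn^1$ and $\tn^2+\tn^4=\al{3}{0}\tn^3=-\al{3}{0}\tn^1$. Equating and using $\tn^1\ne 0$ (which holds by the non-singularity clause in Definition \ref{def:spline}) yields $\al{1}{0}+\al{3}{0}=0$, proving \eqref{eq:al10}.

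For \eqref{eq:dd} I would invoke \eqref{eq:g11evenright} directly with $\val=4$. The hypotheses $\al{2}{0}=\al{4}{0}=0$ and $\al{2}{1}=\al{4}{1}=0$ kill the even-indexed summands, leaving only the $\ii=1$ and $\ii=3$ contributions from each of the two sums. Substituting $\al{3}{0}=-\al{1}{0}$ from the first part collapses the $\partial_1^2$ terms into $-\al{1}{0}(\partial_1^2\pat{1}-\partial_1^2\pat{3})$, and substituting $\tn^3=-\tn^1$ collapses the $\partial_1$ terms into $-(\al{1}{1}-\al{3}{1})\tn^1$. Up to an overall sign, this is exactly \eqref{eq:dd}.

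The proof is essentially bookkeeping rather than content; no step is a genuine obstacle. The one thing that needs care is orientation: the indexing convention places $\pat{3}$ on the far side of the inter-spline boundary with its first parameter axis reversed relative to $\pat{1}$, which is what produces the minus signs in $\tn^3=-\tn^1$ and in \eqref{eq:al10}, and which in turn is what makes the $\al{1}{0}\partial_1^2\pat{1}-\al{1}{0}\partial_1^2\pat{3}$ combination in \eqref{eq:dd} emerge from the alternating sum. Once that sign accounting is pinned down, the two claims drop out mechanically.
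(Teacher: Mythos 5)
Your proposal is correct and follows the paper's own route: \eqref{eq:al10} from equating the two evaluations $\tn^2+\tn^4=\al{1}{0}\tn^1=\al{3}{0}\tn^3$ together with $\tn^3=-\tn^1$ from the internal parametric $C^1$ joins, and \eqref{eq:dd} by specializing the alternating-sum identity \eqref{eq:g11evenright} using $\alpha^2\equiv 0\equiv\alpha^4$. The only difference is that you spell out the sign bookkeeping (and the appeal to non-singularity for $\tn^1\ne 0$) more explicitly than the paper does.
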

\begin{proof}
Since $\val=4$,
$\al{1}{0}\tn^1 = \tn^2 +\tn^4 =  \al{3}{0}\tn^3$
and the parametric $C^1$ constraints imply $\tn^1 := -\tn^3$ 
so that \eqref{eq:al10} follows.
By \eqref{eq:al0}, \eqref{eq:g11evenright} specializes to
\begin{align}
   \notag
   \text{ at } (0,0),\qquad
   0 &=
   \al{1}{0} \partial^2_1 \pat{1}
   +
   \al{3}{0} \partial^2_1 \pat{3}
   +
   \al{1}{1} \partial_1 \pat{1}
   +
   \al{3}{1} \partial_1 \pat{3}
   \\
   \notag
   &=
   \al{1}{0} ( \partial^2_1 \pat{1} - \partial^2_1 \pat{3})
   + (\al{1}{1} - \al{3}{1})\tn^1
\end{align}
as claimed.
\end{proof}
\\
So, remarkably, when two \genspl\ patches meet along a common boundary,
unbiased $G^1$ constraints \emph{across} this boundary imply 
the constraint \eqref{eq:dd} exclusively in terms of derivatives
\emph{along} the boundary.
%------------------------------------lemma 3-----------------------------
\begin{lemma} [$C^2$ \genspl, \ik]
Let $(0,0)$ be the parameter
associated with an \ik\ of the boundary common to two
$C^2$ \genspl s joined by unbiased $G^1$ constraints.
Then, in addition to \eqref{eq:al10}, at $(0,0)$,
%and \eqref{eq:dd}, superseded
%either
%\begin{align}
   %\al{1}{0} &= \al{3}{0} = 0,\ 
   %\label{eq:al10x}
   %\\
   %0 &=
   %+ 2(\al{1}{1}+\al{3}{1}) \partial^2_1 \pat{1} 
   %+ (\al{1}{2} - \al{3}{2})\tn^1
   %\label{eq:dddx}
%\end{align}
%or $\al{1}{0}\ne 0$ and 
\begin{align}
   \al{1}{1} &= \al{3}{1},\ 
   \label{eq:al11}
   \\
   %\text{at } (0,0):\
   0 &=
   \al{1}{0} ( \partial^3_1 \pat{1} - \partial^3_1 \pat{3})
   + 4 \al{1}{1} \partial^2_1 \pat{1} 
   + (\al{1}{2} - \al{3}{2})\tn^1
   .
   \label{eq:ddd}
\end{align}
\label{lem:c2spline}
\end{lemma}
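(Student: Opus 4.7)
The plan is to mirror the proof of Lemma \ref{lem:c1spline} but extract one more order of information by differentiating \eqref{eq:g11gen} a second time in $u$. Evaluated at the edge-knot parameter $(0,0)$ this produces, for $\ii=1,\ldots,4$,
\[
\partial_1^2\partial_2\pat{\ii} + \partial_2\partial_1^2\pat{\ii-1}
= \al{\ii}{0}\,\partial_1^3\pat{\ii} + 2\al{\ii}{1}\,\partial_1^2\pat{\ii} + \al{\ii}{2}\,\partial_1\pat{\ii}.
\]
I would then form the alternating sum $\sum_\ii(-1)^\ii$ over $\ii=1,\ldots,4$. The left-hand side vanishes by exactly the same shift-and-cancel argument used for \eqref{eq:g11evenleft}: each polynomial piece $\pat{\ii}$ is $C^\infty$ so mixed partials commute, and re-indexing $\ii-1\mapsto \ii$ in the second summand flips the sign and cancels the first. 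Hence the alternating sum of the right-hand side also equals zero.

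Using $\al{2}{j}\equiv 0\equiv \al{4}{j}$ from \eqref{eq:al0}, together with the Lemma \ref{lem:c1spline} identities $\al{3}{0}=-\al{1}{0}$ and $\partial_1\pat{3}=-\tn^1$, this right-hand side sum collapses to
\[
\al{1}{0}(\partial_1^3\pat{1}-\partial_1^3\pat{3}) + 2\al{1}{1}\,\partial_1^2\pat{1} + 2\al{3}{1}\,\partial_1^2\pat{3} + (\al{1}{2}-\al{3}{2})\,\tn^1 = 0.
\]
The key new ingredient needed to finish is the identity $\partial_1^2\pat{1}(0,0)=\partial_1^2\pat{3}(0,0)$, which I would obtain from the internal $C^2$ smoothness of each \genspl. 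Differentiating the parametric $C^1$ relation $\partial_2\pat{2}(u,0) = -\partial_1\pat{1}(0,u)$ once more in $u$ and setting $u=0$ gives $\partial_2^2\pat{2}(0,0)=\partial_1^2\pat{1}(0,0)$; the incidence $\pat{3}(u,0)=\pat{2}(0,u)$ differentiated twice in $u$ gives $\partial_1^2\pat{3}(0,0)=\partial_2^2\pat{2}(0,0)$; composing yields the identity.

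With $\partial_1^2\pat{1}=\partial_1^2\pat{3}$ in hand, the first term of \eqref{eq:dd} vanishes and, since regularity forces $\tn^1\ne 0$, we conclude $\al{1}{1}=\al{3}{1}$, which is \eqref{eq:al11}. Substituting both $\al{3}{1}=\al{1}{1}$ and $\partial_1^2\pat{3}=\partial_1^2\pat{1}$ into the displayed alternating sum combines its two middle terms into $4\al{1}{1}\,\partial_1^2\pat{1}$ and delivers \eqref{eq:ddd}. The main subtlety is the asymmetry in the resulting identity: although the boundary curve is $C^2$ (so $\partial_1^2\pat{1}=\partial_1^2\pat{3}$), the \genspl\ is not assumed $C^3$, so the jump $\partial_1^3\pat{1}-\partial_1^3\pat{3}$ is generically nonzero and must be carried along as the irreducible remainder through the sign bookkeeping.
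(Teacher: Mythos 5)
There is a genuine gap at the central cancellation step. Differentiating the left-hand side of \eqref{eq:g1}, namely $\partial_2\pat{\ii}(u,0)+\partial_1\pat{\ii-1}(0,u)$, twice with respect to $u$ gives at $(0,0)$
\begin{equation*}
\partial_1^2\partial_2\pat{\ii}+\partial_2^2\partial_1\pat{\ii-1},
\end{equation*}
not $\partial_1^2\partial_2\pat{\ii}+\partial_2\partial_1^2\pat{\ii-1}$ as you wrote: the second summand depends only on the \emph{second} argument, so both new derivatives fall on slot $2$. The two terms therefore have different shapes ($2{+}1$ versus $1{+}2$ derivatives distributed over the two slots); commuting mixed partials does not identify them, and re-indexing $\ii-1\mapsto\ii$ leaves $\sum_\ii(-1)^\ii\bigl(\partial_1^2\partial_2\pat{\ii}-\partial_2^2\partial_1\pat{\ii}\bigr)$, which has no reason to vanish. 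It is only your mis-differentiation that makes the telescoping appear to work. A sanity check: your cancellation invokes only the smoothness of the individual pieces, so if it were valid, the resulting identity would already hold for $C^1$ \genspl s, and the $C^2$ hypothesis of Lemma~\ref{lem:c2spline} would be needed only for \eqref{eq:al11} --- collapsing the distinction with Lemma~\ref{lem:c1spline}.

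The missing ingredient is precisely where the paper's proof does its work: parametric $C^2$ continuity \emph{across the spline-internal (dashed) boundaries} supplies the two extra relations \eqref{eq:c12} (for $\ii=2,4$) among exactly the four mixed third derivatives $\partial_1^2\partial_2\pat{1}$, $\partial_1\partial_2^2\pat{2}$, $\partial_1^2\partial_2\pat{3}$, $\partial_1\partial_2^2\pat{4}$. Summing the $\ii=1,3$ instances of the differentiated constraint \eqref{eq:g12gen} and subtracting the $\ii=2,4$ instances of \eqref{eq:c12} eliminates all mixed derivatives and leaves the vanishing of the sum of the two right-hand sides, which is the displayed identity you reach. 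Everything downstream of that point --- obtaining $\partial_1^2\pat{1}(0,0)=\partial_1^2\pat{3}(0,0)$ from internal $C^2$ continuity, concluding \eqref{eq:al11} via \eqref{eq:dd}, and merging the two middle terms into $4\al{1}{1}\partial_1^2\pat{1}$ --- is correct and agrees with the paper, though note that your justification of $\partial_2^2\pat{2}(0,0)=\partial_1^2\pat{1}(0,0)$ by differentiating the $C^1$ cross-boundary relation in $u$ actually yields a statement about mixed partials; the identity you need is the matching of second derivatives of the $C^2$ \genspl\ restricted to its boundary \kl.
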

\begin{proof}
Since the \genspl s are $C^2$,
$\partial^2_1 \pat{1}(0,0) = \partial^2_1 \pat{3} (0,0)$.
%If $\al{1}{0} \ne 0$
%then \eqref{eq:dd}
Then \eqref{eq:dd} implies \eqref{eq:al11}.
%otherwise the tangents form an X.

Parametric $C^2$ continuity across the spline-internal boundaries
(see dashed lines in \Fig{fig:spl2}) implies
\begin{equation}
\text{ for } \ii=2,4, \text{ at } (0,0),\qquad
   \partial_2\partial_1\partial_2\pat{\ii}
   +
   \partial_1\partial_2\partial_1\pat{\ii-1}
   =
   0
   .
   \label{eq:c12}
\end{equation}
Differentiating \eqref{eq:g11gen} once more along the
(direction corresponding to the) common boundary of the two \genspl s,
we obtain  for $\ii=1,3, \text{at } (0,0),$
\begin{equation}
%\ii=1,3, \text{at } (0,0),\quad
   \partial_1\partial_1\partial_2\pat{\ii}
   +
   \partial_2\partial_2\partial_1\pat{\ii-1}
   =
   \al{\ii}{0} \partial^3_1 \pat{\ii}
   +
   2
   \al{\ii}{1} \partial^2_1 \pat{\ii}
   +
   \al{\ii}{2} \partial_1 \pat{\ii}
   .
   \label{eq:g12gen}
\end{equation}
Summing the two instances of \eqref{eq:g12gen} 
and subtracting the two instances of \eqref{eq:c12} 
eliminates the mixed derivatives of the left hand side
and yields at $(0,0)$
\begin{align}
   0
   &=
   \al{1}{0} \partial^3_1 \pat{1}
   +
   2
   \al{1}{1} \partial^2_1 \pat{1}
   +
   \al{1}{2} \partial_1 \pat{1}
   \\
   \notag
   &+
   \al{3}{0} \partial^3_1 \pat{3}
   +
   2
   \al{3}{1} \partial^2_1 \pat{3}
   +
   \al{3}{2} \partial_1 \pat{3}
   .
%   \\
%   &=
%   \al{1}{0} ( \partial^3_1 \pat{1} - \partial^3_1 \pat{3})
%   + 4 \al{1}{1} \partial^2_1 \pat{1} 
%   + (\al{1}{2} - \al{3}{2})\tn^1
%   .
\end{align}
Parametric $C^2$ continuity then implies \eqref{eq:ddd}.
%if $\al{1}{0}\ne 0$ and \eqref{eq:dddx} otherwise.
\end{proof}

\begin{comment}
Note that $\alpha^3$ and $\alpha^1$ can be thought of as consecutive pieces
of a piecewise function $\alpha$ along the boundary
(parameterized with opposite orientation).
Moreover, $\alpha^3$ is a derivative, 
in the direction of the boundary parameter,
of a reparameterization $\phi^3: \R\to \R^2$ such that 
at $u=0$,  
$
   \partial^\gs_1 \pat{2}\circ\phi^3(u) = \partial^\gs_1 \pat{3}(u,0),
$
$\gs=0,\ldots,s$ for $C^s$ \genspl s
(see \Fig{fig:spl2}).
Differentiating in the opposite direction, we have for $\phi^1: \R\to \R^2$
that at $u=0$, 
$
   \partial^\gs_1 \pat{1}\circ\phi^1(u) = \partial^\gs_1 \pat{4}(u,0),
$
$\gs=0,\ldots,s$. 
Relations \eqref{eq:al10} and \eqref{eq:al11} then establish that 
$\phi$ is $C^s$. In particular,
\eqref{eq:al10} enforces matching first 
and \eqref{eq:al11} matching second derivatives of $\phi$.
The minus sign in \eqref{eq:al10} is due to the direction flip.
%in hindsight 
%The algebraic manipulations make precise but also obscure that
%the relations \eqref{eq:al10} and \eqref{eq:al11} 
%should be expected. 
%The parametric $C^2$ continuity then forces the second derivatives
%of the reparameterization to agree. Here 
\end{comment}

\subsection{Linear $\alpha$ and \vloc\ constructions}
\label{subsec:local}
The Taylor expansions up to order two of the patches joining at a point 
are strongly intermeshed by Equation \eqref{eq:g11gen0}.
To avoid solving large, global systems, 
a vertex should not depend on the expansions at its
neighbors.

\begin{definition} [\vloc\ construction]
A construction is $G^1$ \emph{\vloc} if
we can solve at every vertex (with local parameters $(u,v)=(0,0)$)
the unbiased $G^1$ constraints
\eqref{eq:g1}$_{u=0}$ and \eqref{eq:g11gen} 
on the second-order Taylor expansion
$\partial^i_1\partial^j_2\pat{\ii}$, $0\le i,j, i+j\le 2$
independent of the expansions at its neighbors.
\end{definition}

Note that a \vloc\ construction can use 
{\it a priori} known input, for example the local connectivity 
and the valence of the neighbors.
Nevertheless, the unbiased $G^1$ constraints imply
a \emph{local, unbiased choice of the 
tangent directions}, namely such that 
\begin{equation}
   \alpha^\ii(0) := 2\cos\frac{2\pi}{\val}
   .
   %\alpha^\ii(0) := \rho 2\cos\frac{2\pi}{\val}.
\label{eq:unbiasedvert}
\end{equation}
(For a proof that logical symmetry implies 
\eqref{eq:unbiasedvert} see e.g.\ \cite[Prop 3]{Peters:1994:CCM}.)

\begin{corollary} [valence symmetry for $\val=4$ and linear $\alpha$]
Let $\val=4$ and let $n^\ii$ denote the valence of the $\ii$th
neighbor vertex, $\ii=1,\ldots,\val$.
Then a local, unbiased choice of the tangent directions and
$\alpha^\ii$ linear are compatible with 
unbiased $G^1$ constraints only when the valences 
of opposite neighbors agree: $n^\ii=n^{\ii+2}$.
\label{cor:valfour}
\end{corollary}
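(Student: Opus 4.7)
The plan is to combine the X-tangent conclusion of Lemma \ref{lem:n4xform} with the unbiased boundary values that an affine $\alpha^\ii$ must take at \emph{both} endpoints of each edge.

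First I would apply the unbiased vertex condition \eqref{eq:unbiasedvert} at the central vertex, where $\val=4$, obtaining $\al{\ii}{0}=2\cos(2\pi/4)=0$ for $\ii=1,\ldots,4$. Substituting into \eqref{eq:g1}$_{u=0}$, namely $\tn^{\ii+1}+\tn^{\ii-1}=\al{\ii}{0}\tn^\ii$, gives $\tn^{\ii+1}=-\tn^{\ii-1}$, so the tangents form an X. Lemma \ref{lem:n4xform} then applies and yields $\al{1}{1}=\al{3}{1}$ and $\al{2}{1}=\al{4}{1}$.

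Next I would pin down $\al{\ii}{1}$ from the linearity hypothesis together with the value of $\alpha^\ii$ at the far endpoint of the $\ii$th edge. With edges normalized to unit length in parameter space, the affine function $\alpha^\ii$ is determined by its two endpoint values. At $u=0$ it equals $0$, as just shown. At the other endpoint, the unbiased condition applied at the neighbor vertex of valence $n^\ii$ fixes the value, but as measured from the neighbor's side of the common edge. The sign-flip encoded in \eqref{eq:al10} (the $\alpha$-coefficient changes sign under reversal of the boundary parameter, because $\alpha$ multiplies the tangent $\partial_1\pat{\ii}$ which itself flips sign) translates this value into $-2\cos(2\pi/n^\ii)$ in our parameterization, so $\al{\ii}{1}=-2\cos(2\pi/n^\ii)$.

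Substituting into $\al{1}{1}=\al{3}{1}$ and $\al{2}{1}=\al{4}{1}$ gives $\cos(2\pi/n^1)=\cos(2\pi/n^3)$ and $\cos(2\pi/n^2)=\cos(2\pi/n^4)$. Since $n\mapsto\cos(2\pi/n)$ is strictly monotone on integer valences $n\ge 3$, this forces $n^1=n^3$ and $n^2=n^4$, as claimed. The only delicate step is the orientation convention at the far endpoint; once the sign-flip of \eqref{eq:al10} is correctly invoked, the corollary follows directly from the X-tangent lemma.
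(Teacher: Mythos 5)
Your argument is correct and is essentially the paper's own proof: the unbiased choice \eqref{eq:unbiasedvert} forces $\al{\ii}{0}=2\cos(\pi/2)=0$ at the central $4$-valent vertex, so the tangents form an X, Lemma \ref{lem:n4xform} gives $\al{1}{1}=\al{3}{1}$ and $\al{2}{1}=\al{4}{1}$, and linearity plus the endpoint value $\pm 2\cos(2\pi/n^{\ii})$ at the neighbor then forces $n^{\ii}=n^{\ii+2}$ by monotonicity of $n\mapsto\cos(2\pi/n)$. The paper's two-line proof leaves the endpoint-value and monotonicity steps implicit (and does not bother with the orientation sign, which cancels in the comparison anyway), so your write-up is just a more explicit version of the same argument.
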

\begin{proof}
The claim follows from Lemma \ref{lem:n4xform}
since by the unbiased choice
$\alpha^\ii(0) := 0$ and $\alpha^\ii(1) := %\rho
2\cos\frac{2\pi}{\val^k}$.
\end{proof}

Corollary \ref{cor:valfour} is a remarkably strong restriction
since vertices of valence $\val=4$ are common.
Choosing linear $\alpha$ can therefore be problematic.
For example, the construction \cite{conf/gmp/HahmannBC08}
can therefore not succeed in general.

\bigskip
In the most challenging case, 
the vertex enclosure constraint \eqref{eq:vertencl} 
applies at each vertex. 
While the vertex-enclosure constraint %\eqref{eq:vertencl}
only restricts the normal component of the second derivatives
along the curves at each vertex, independence of the normals 
at endpoints means that in general all three coordinates are constrained.
So for \vloc\ constructions, we should 
assume that the \emph{second-order Taylor expansion}
has to be set independently at each vertex.
It is this scenario with 
\emph{unrestricted choice of geometric instantiation}
of the second-order Taylor expansion that we take into account
when, in the following, we prefix a statement with {\bf `in general'}.
%rg{(if one of $n^0,n^1$ is odd %(\fgr \ref{fig:oddeven} \IR\ \IT),
%then one of the boundary is degrees of freedom) DEFINE LOCAL
%CONSTRUCTION ASSUME FROM NOW ON }
%(If both $n^0$ and $n^1$ are even and not 4, then 
%\emph{vertex enclosure} constraint \eqref{eq:vertencl}
%implies that $\partial^2_1\B^{1}$ and $\partial^2_1\B^{3}$
%are determined independently 
%g{in general}

\begin{figure}[h]
    \centering
    \psfrag{a1=0}{$\alq{0}{0}=0$}
    \psfrag{a2=0}{$\alq{q}{0}=0$}
    \psfrag{a=0}{$\alq{j}{0}=0$}
    \psfrag{amj0}{$\alq{-j}{0}=0$}
    \psfrag{am10}{$\alq{-1}{0}=0$}
    \psfrag{a10}{$\alq{1}{0}=0$}
    \epsfig{file=\CAGDfig/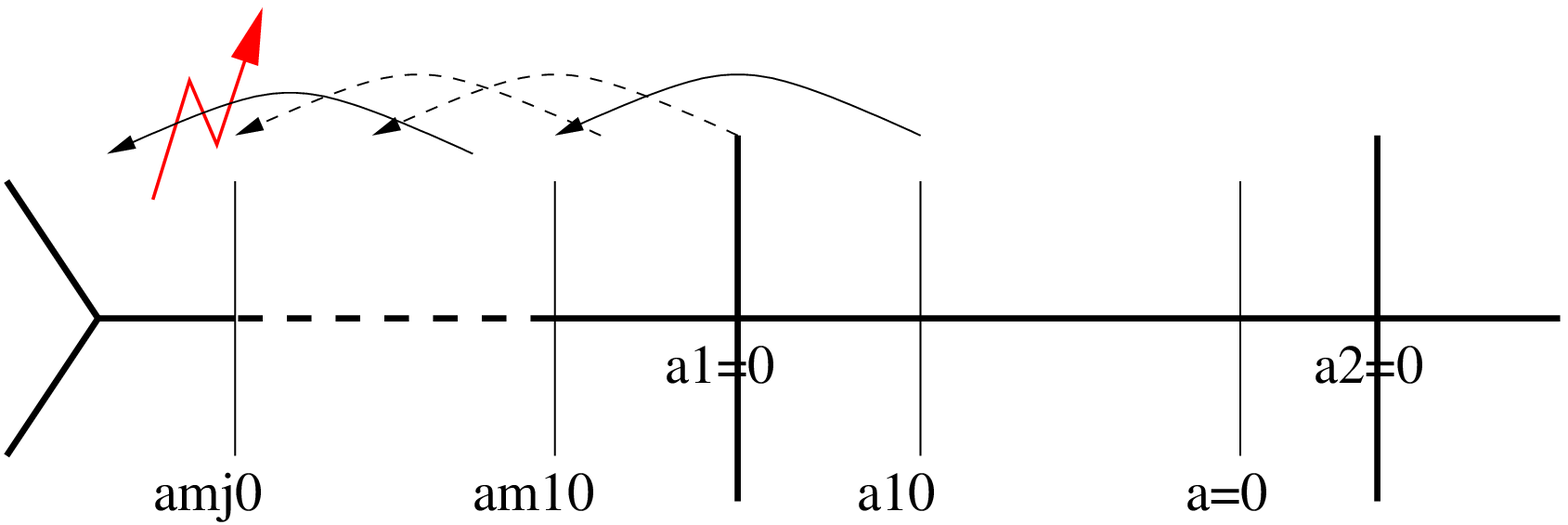,width=\wid}
    \caption{
    {\bf Propagation} of $\alq{j}{0}=0$ in Lemma  \ref{lem:vloc}.
    }
    \label{fig:propag4}
\end{figure}
\medskip
\noindent
\emph{Along a boundary} curve, each scalar function $\alpha^\ii$
can consist of pieces that correspond to the knot segments
of the two \genspl s meeting along the curve.
Since, in this context, we only deal with one $\ii$ at a time,
we drop this $\ii$ superscript and define 
\begin{equation}
   \alq{j}{\ell} \in \R
   \quad
   \text{ to be the }
   \ell\text{th derivative of the }
   j\text{th piece } \alpha_j \text{ at } 0
   .
   \label{eq:alphajell}
\end{equation}
For example, $\alpha^3$ and $\alpha^1$ in \Fig{fig:spl2},
can be relabeled $\alpha_j := \alpha^3$ and $\alpha_{j+1} := \alpha^1$.
%If a knot can be removed without
%changing the spline patch, we call it trivial.
%In particlar, if the polynomial pieces of the spline patch
%have been subdivided after construction, we add trivial knots.
%--------------------------------lemma 4---------------------------
%\jorg{when assume $\Box$ is unit square?} -- only when bi3

\begin{lemma} [everywhere piecewise linear $\alpha$ ruled out]
In general, 
a \vloc\ construction of unbiased %\nurbs\
$G^1$ transitions between $C^1$ \genspl\ patches
with \emph{everywhere} at most linear $\alpha$
is not possible.
\label{lem:vloc}
\end{lemma}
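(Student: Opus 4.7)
My plan is to combine a propagation argument (as suggested by Figure \ref{fig:propag4}, where $\alq{j}{0}=0$ spreads outward from valence-4 vertices across \ik s) with the rigidity of piecewise linear functions, to derive an overdetermined system.

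By the unbiased vertex-local choice \eqref{eq:unbiasedvert}, $\alpha(0)=2\cos(2\pi/\val)$ at each vertex of valence $\val$; this vanishes precisely when $\val=4$. I pick an edge $e$ of the mesh with at least one interior \ik. At any such \ik, Lemma \ref{lem:c1spline} gives the $\R^3$-identity $0 = \al{1}{0}(\partial^2_1 \pat{1}-\partial^2_1 \pat{3}) + (\al{1}{1}-\al{3}{1})\tn^1$. Under the `\emph{in general}' convention, vertex-locality lets me set the second-order Taylor data freely at each vertex, so in particular the jump $\partial^2_1 \pat{1}-\partial^2_1 \pat{3}$ is not constrained to be a scalar multiple of $\tn^1$. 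The two scalar coefficients in \eqref{eq:dd} must then vanish separately: $\al{1}{0}=0$ and $\al{1}{1}=\al{3}{1}$. Using the direction-flip convention behind \eqref{eq:al10}, this says $\alpha$ takes the value $0$ at every interior \ik\ and its one-sided derivatives agree there. Hence a piecewise linear $\alpha$ collapses on $e$ to a single linear function with $\alpha(s_k)=0$ at every interior \ik\ $s_k$.

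Three subcases close the argument. If $e$ has at least two interior \ik s, a linear function vanishing at two distinct points is identically zero, so both endpoints of $e$ would have to be valence-4; generically some edge of the mesh violates this. If $e$ has exactly one interior \ik\ at $s_1$, the endpoint values $2\cos(2\pi/\val_1)$ and $2\cos(2\pi/\val_2)$ together with $\alpha(s_1)=0$ pin $s_1$ down by the specific relation $s_1/(L-s_1) = -\cos(2\pi/\val_1)/\cos(2\pi/\val_2)$, which for generic valences and locally chosen knot placements fails. If $e$ carries no interior \ik, Corollary \ref{cor:valfour} directly forbids mixed opposite-neighbor valences at any valence-4 endpoint of $e$, which again fails generically. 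Combining these, no \vloc\ construction with everywhere linear $\alpha$ survives a generic quad mesh.

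The hard part will be the linear-independence step: showing that $\partial^2_1 \pat{1}-\partial^2_1 \pat{3}$ is generically not parallel to $\tn^1$. The jump in the second derivative of the common boundary across an \ik\ carries a tangent-direction piece (from the $G^1$ reparametrization between the two gensplines) and a normal-to-tangent piece (from curvature jumps of the boundary curve). The \vloc\ freedom to set 2-jets independently at each vertex makes the normal piece generically nonzero, so the two vectors span a 2-plane in $\R^3$ and the two scalar coefficients in \eqref{eq:dd} must separate. Once this separation is justified, the remaining steps are routine.
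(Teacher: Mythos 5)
Your overall strategy --- force $\alpha$ to vanish at the interior \ik s and then play the rigidity of a linear function against the prescribed vertex values $2\cos(2\pi/\val)$ --- aims at the same target as the paper's argument, but the step you yourself flag as ``the hard part'' is a genuine gap, and it cannot be repaired in the form you propose. You read \eqref{eq:dd} as forcing $\al{1}{0}=0$ and $\al{1}{1}=\al{3}{1}$ separately, on the grounds that the jump $\partial^2_1\pat{1}-\partial^2_1\pat{3}$ is ``generically'' not parallel to $\tn^1$. But the paper's \emph{in general} convention covers only the second-order Taylor data at \emph{mesh vertices} (those jets must be set independently because of the vertex-enclosure constraint); the behaviour of the boundary curve at an \emph{interior} \ik\ is not input data --- it is chosen by the construction. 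A construction is free to make the second derivative continuous there (for a cubic spline a simple knot gives this automatically, so $\partial^2_1\pat{1}=\partial^2_1\pat{3}$ and \eqref{eq:dd} constrains only $\al{1}{1}-\al{3}{1}$, saying nothing about $\al{1}{0}$), or to align the jump with $\tn^1$. Indeed, if your reading of \eqref{eq:dd} were sound it would apply to non-linear $\alpha$ as well and would force $\al{1}{0}=0$ at every \ik\ of every construction; that is not how the paper uses \eqref{eq:dd} --- in Theorem \ref{thm:twodblknt} it is a constraint to be \emph{resolved} by free control points, and in Lemma \ref{lem:notlinear} the conclusion $\al{3}{0}=0$ is drawn only under the extra hypothesis that the adjacent segment is already fixed and general. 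Your endgame also leans on ``locally chosen knot placements'' failing to hit a prescribed $s_1$, which is not an assumption you are entitled to, since knot placement is again the constructor's choice.

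The paper obtains the vanishing of $\alpha$ by a different and sound mechanism: propagation. It seeds the argument at a vertex all of whose neighbours have valence $4$, where vertex-locality and the unbiased choice \eqref{eq:unbiasedvert} force the adjacent piece of $\alpha$ to be identically zero; at the next knot the two \genspl s are internally parametrically $C^1$, so the vanishing of $\alpha$ there puts the four tangents in an X, Lemma \ref{lem:n4xform} transfers the (zero) derivative of $\alpha$ across, and a linear piece with value and derivative zero at one end is identically zero --- so the zero piece and the X configuration march along the mesh, through knots and through $4$-valent vertices alike. The contradiction arrives when the chain reaches a vertex of valence $\val\ne 4$, where \eqref{eq:unbiasedvert} demands $\alpha=2\cos(2\pi/\val)\ne 0$. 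To salvage your write-up you would need to replace the genericity-of-the-jump step with such a propagation (or some other argument that does not presume control over the curve's interior second derivatives), and to specify where the seed edge with $\alpha\equiv 0$ comes from.
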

%is possible only if an unbiased $G^1$ transition is 
%possible without interior knots.
%polynomial patch connecting a valence 4 vertex to a
%valence $\val\ne 4$ vertex is not possible with linear 
%$\alpha$, 
%spline patches must have non-trivial interior knots,
%then, in general, no \vloc\ \nurbs\
%based exclusively 
\begin{proof}
%By Lemma XXX we need to have at least one interior knot in
%the construction.
Consider a vertex surrounded by vertices of valence $\val=4$.
Then \vloc\ construction implies that $\alq{0}{0}=0$.
Assume for now that \emph{\ik s} exist.
Then local construction implies
also $\alq{-1}{0}=0$ (and $\alq{1}{0}=0$) for the
immediate neighbor \ibv es
since all neighbor vertices are of valence 4 
(by definition at most two \kl s intersect within a \genspl).
Shifting the focus to one such an \ibv, say the one corresponding
to $\alq{-1}{0}=0$, we observe that 
its tangents form an X since the two \genspl s, one at either side,
are internally parametrically $C^1$ (across each dashed line in
\Fig{fig:spl2}).
So Lemma \ref{lem:n4xform} and $\alq{0}{0}=0$ imply $\alq{-2}{0}=0$ 
%Therefore also the next 
and again this \ibv's tangents form an X.
In this manner, X configurations and $\alq{-j}{0}=0$ propagate,
also across vertices of valence $\val=4$ whose neighbors are not 
all of valence  $\val=4$
(see the arrows in Figure \ref{fig:propag4} for illustration).
Once the propagation meets an original vertex
with valence $\val\ne 4$
(whether or not we had \ik s to start with), \vloc\ construction
clashes with Lemma \ref{lem:n4xform}.
%Lemma \ref{lem:n4xform} the assumption unbiased $G^1$ transition.
\end{proof}

Lastly, we characterize a known source of poor shape 
of smooth surface constructions due to restricted boundary curves
\cite{Peters:2000:ModPCC}.
This limited flexibility is undesirable and constructions that cause it
will later be excluded.
%-------------------------lemma 5 -------------------------------
\begin{figure}[h]
    \centering
    \epsfig{file=\CAGDfig/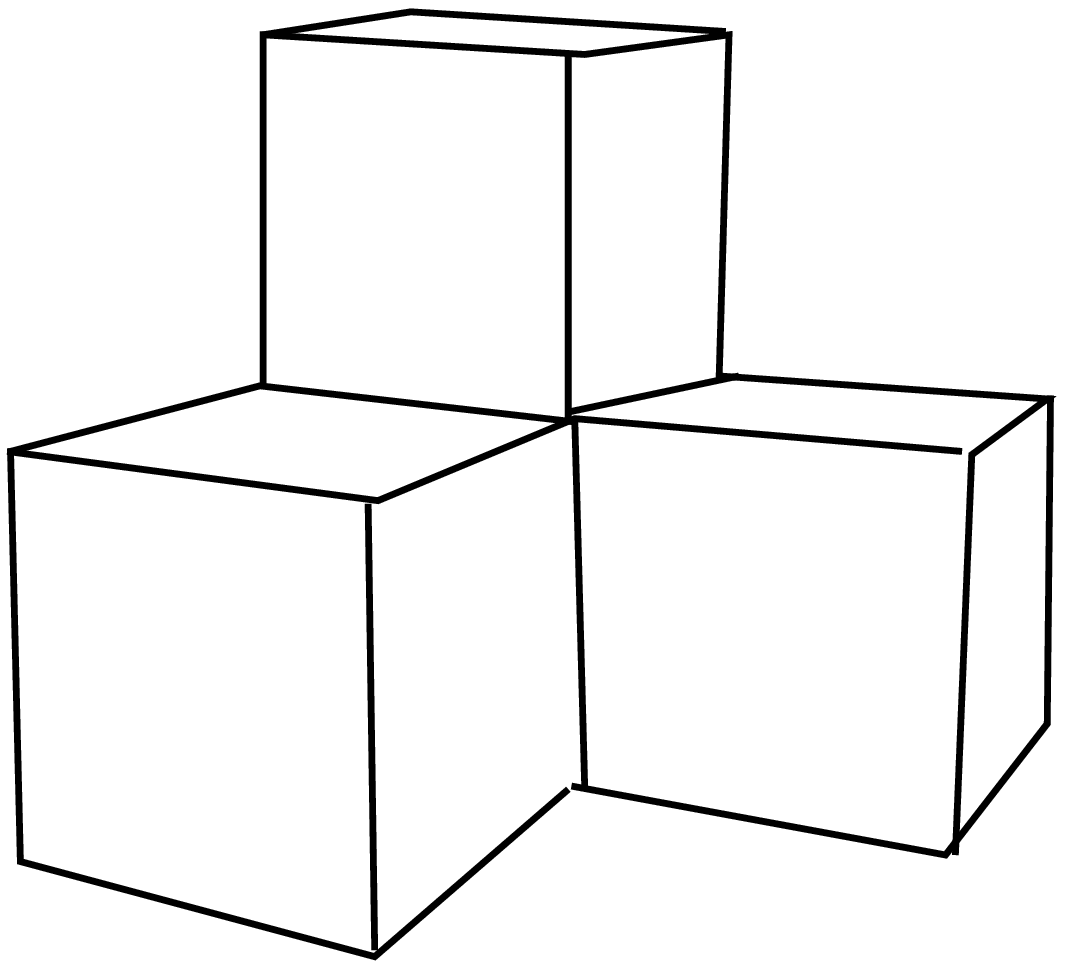,width=0.25\wid}
    \hskip 0.1\wid
    \epsfig{file=\CAGDfig/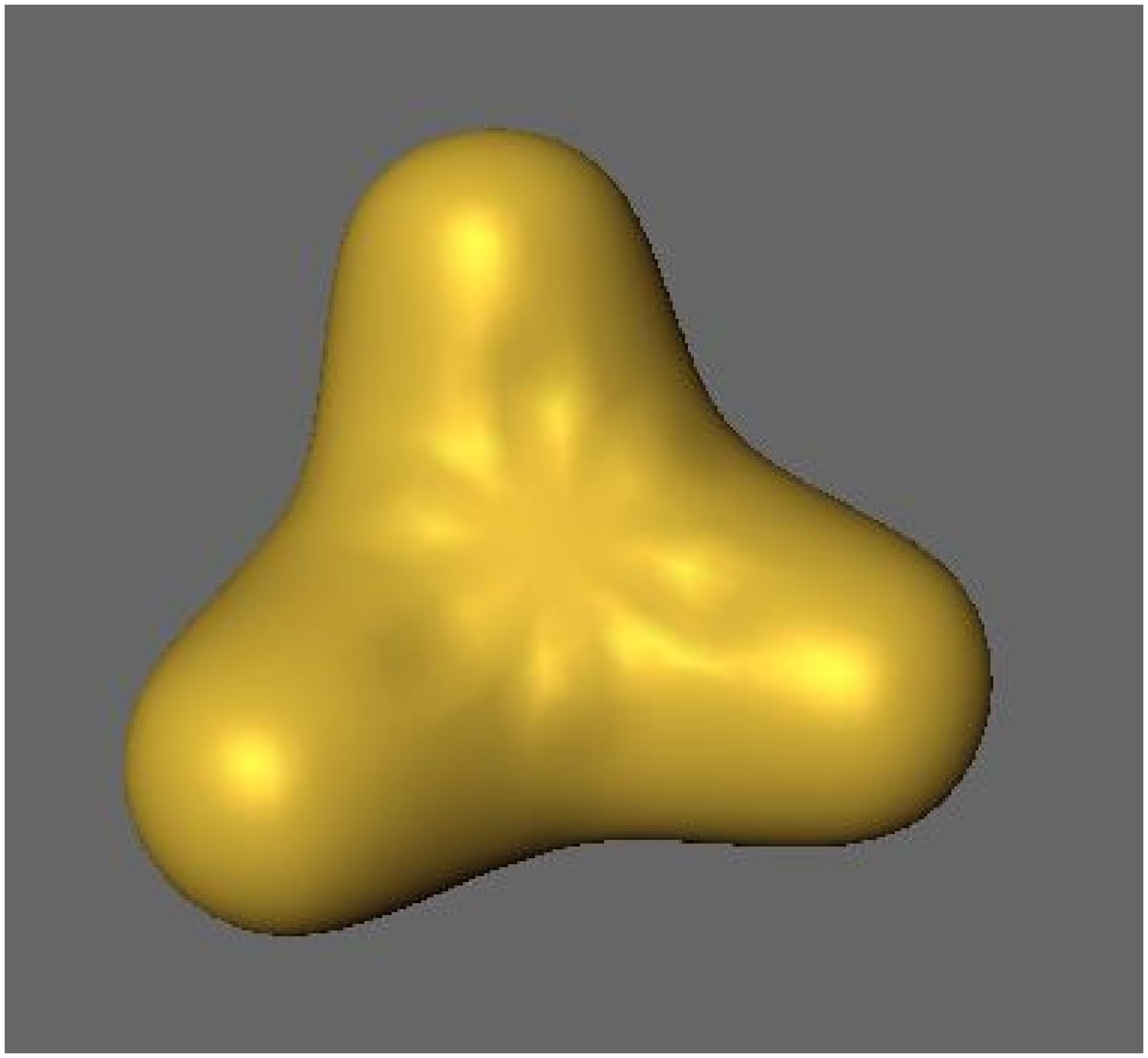,width=0.3\wid}
    \caption{
    {\bf Shape defect} (star shape) due to embedded straight line segments
    at a higher order saddle from \cite{url:saddle}.
    }
    \label{fig:defect}
\end{figure}

\begin{lemma}[Flatness at saddle points]
%For an unbiased choice of tangent directions,
Let $\crv$ be a curve segment 
%resulting from an unbiased construction and
emanating from a higher-order
saddle point $\pt := \crv(0)$.
If the derivative $\crv'$ of $\crv$
factors into a linear vector-valued polynomial
and a scalar factor:
\begin{align}
   \crv' &:= \lin\denom,
   \label{eq:lin_den}
   \\ 
   \notag
   \lin&:\R\to\R^3,\ \dg{\lin}\le 1, 
   \quad
   \denom:\R\to\R,\ \dg{\denom}\le 1
\end{align}
%there exist non-planar data so that
then $\crv$ is a planar curve segment.
If the saddle is symmetric
then $\crv$ is a \emph{straight line segment}.
%lies in the tangent plane.
%then the curve is a straight line segment.
%has zero curvature for
\label{lem:quad}
\end{lemma}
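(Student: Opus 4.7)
The plan is to handle the two assertions in sequence. Writing $\lin(t) = \lin_0 + t\lin_1$ with $\lin_0,\lin_1\in\R^3$ and $\denom$ a scalar polynomial of degree at most one, hypothesis \eqref{eq:lin_den} yields
\begin{equation*}
   \crv'(t) \,=\, (\lin_0 + t\lin_1)\,\denom(t),
\end{equation*}
so $\crv'(t)$ lies, for every $t$, in the fixed two-dimensional linear subspace $V := \mathrm{span}\{\lin_0,\lin_1\}$. Integrating from $0$ gives $\crv(t) - \pt \in V$ for all $t$, so the image of $\crv$ is contained in the affine plane $\pt + V$; this establishes the planarity claim.

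For the straight-line claim, I would exploit the saddle symmetry: by hypothesis there is a nontrivial reflection $R$ fixing $\pt$ that maps the ambient saddle surface to itself and preserves the distinguished emanating curve $\crv$ setwise. The fixed locus of $R$ is a plane $\Pi$ through $\pt$ that must contain the tangent vector $\crv'(0)$. Combined with the planarity just shown, $\crv \subset (\pt+V)\cap \Pi$; generically $\pt+V\ne \Pi$, and the intersection is then a straight line through $\pt$, forcing $\crv$ to be a straight line segment.

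The main obstacle is formalising the invariance of $\crv$ under $R$. In the spline-complex context I would argue that the curve emanating from the symmetric saddle in a prescribed tangent direction is uniquely determined by that direction, so $R(\crv)$ — which still emanates from $\pt$ in a direction fixed by $R$ — must coincide with $\crv$. The degenerate coincidence $\pt+V=\Pi$ would be treated separately, by noting it arises only in special configurations where either $\lin$ collapses to a vector proportional to $\crv'(0)$ (in which case the integration argument already delivers a straight line) or the reflection plane hypothesis must be replaced by an appeal to higher-order rotational symmetry of the saddle.
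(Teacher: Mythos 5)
Your planarity argument is correct, and it is in fact a different (and more elementary) route than the paper's: you observe that $\crv'(t)=\denom(t)\lin_0+t\denom(t)\lin_1$ is confined to $V=\mathrm{span}\{\lin_0,\lin_1\}$ and integrate, which uses only the factorization \eqref{eq:lin_den} and not the saddle hypothesis at all. The paper instead computes the jet $\crv'(0)=\lin(0)$, $\crv''(0)=\lin'(0)+\denom_1\lin(0)$, $\crv'''(0)=2\denom_1\lin'(0)$ and uses the vanishing normal curvature at the higher-order saddle to conclude $\nor\cdot\crv''(0)=0$, hence $\nor\cdot\lin'(0)=0$ and $\nor\cdot\crv'''(0)=0$; since $\crv$ has degree at most $3$, it lies in the tangent plane at $\pt$. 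Your version proves the stated conclusion with less, which is fine.

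The straight-line half, however, has a genuine gap. You need the reflection $R$ of the symmetric saddle to fix $\crv$ \emph{setwise}, and you justify this by claiming the curve is uniquely determined by its tangent direction at $\pt$. That is not available here: $\crv$ is a boundary curve of a locally constructed patch complex, a free design element subject to constraints, and the symmetry of the saddle only pins down the low-order Taylor expansion at the vertex, not the whole segment. Without global invariance your intersection $(\pt+V)\cap\Pi$ argument does not get off the ground, and your treatment of the degenerate case $\pt+V=\Pi$ remains a sketch. The paper avoids all of this by staying at the level of the $3$-jet: symmetry of the saddle yields only the $2$-jet statement that $\crv'(0)$ and $\crv''(0)$ are collinear, and then the factorization is used a second time, namely $\crv'''(0)=2\denom_1\lin'(0)$ with $\lin'(0)=\crv''(0)-\denom_1\crv'(0)$, so $\crv'''(0)$ is collinear with $\crv'(0)$ as well; since $\crv$ is a polynomial of degree at most $3$, collinearity of all derivatives at $0$ forces a straight segment. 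Your second half never exploits the factorization beyond planarity, which is precisely the ingredient needed to close the argument locally.
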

\begin{proof}
%(such as the center of the monkey saddle), 
Let $\nor$ be the normal at $\pt$
and, without loss of generality, $\denom(u) := 1+\denom_1 u$
for some $\denom_1\in \R$.
Then 
$
   \crv'(0) 
   = \lin(0)
$, 
$
   \crv''(0) 
   = \lin'(0)+\lin(0)\denom_1
%   = \qud''(0)
$ 
and
$
   \crv'''(0) 
   = 2\lin'(0)\denom_1
$.
%Now let $\pt$ be from a higher-order saddle point $\pt := \crv(0)$.
At a higher-order saddle point,
the normal curvature is zero, and therefore $\nor\cdot\crv''(0)=0$. 
This implies $\nor\cdot\lin'(0)=0$ and $\nor\cdot\crv'''(0)=0$
establishing planarity.
%Since the unbiased \vloc\ construction forces symmetry
If the saddle is symmetric then
%in the tangent plane, 
$\crv'(0)$  and $\crv''(0)$ 
are collinear and so is $\crv'''(0) = 2\lin'(0)\denom_1 $.
%  \lin'(0) = \crv''(0)-\crv'(0)\denom_1
%as claimed.
%isolated umbilic point with zero Gaussian curvature at the origin
%A higher-order saddle point $\pt$, such as the center of the  
%monkey saddle, has zero curvature.
%g{planar curve}
%zero curvature implies $\nor\cdot \crv''(0)=0$,
%hence $\nor \cdot\qud''(0) = 0$ and therefore 
%$\nor\cdot \crv'''(0)=0$ so that the 
%curve lies in the tangent plane. % osculating
%g{
%For a planar curve, to have zero curvature
%implies the claim.}
%a straight \emph{line segment}.
\end{proof}

A higher-order saddle, such as the monkey saddle of \Fig{fig:defect}, 
should have non-zero Gauss curvature apart from
the central saddle point. 
Therefore, we will in the following \emph{disqualify} constructions that force 
\emph{straight segments} on the boundary for non-flat geometry.
% quadratic expansion zero

To summarize, we showed that \vloc\ unbiased $G^1$ constructions
with \genspl s are subject to strong restrictions on the 
reparametrization $\alpha$ (Lemma 
\ref{lem:n4xform}, \ref{lem:c1spline} and \ref{lem:c2spline})
or the allowable valence of the 
vertices (Corollary \ref{cor:valfour}).
In the next section, we apply these general restrictions to 
polynomial splines.

\section{Lower bounds for degree bi-3}
\label{sec:lower}
We now argue that, in general,
\vloc\ enforcement of unbiased $G^1$ constraints
with polynomial tensor-product splines of degree bi-3 
(bicubic) is possible only if the spline patches 
have at least two internal double knots per edge.

Since we specialize to polynomials $\pat{\ii}$ of degree bi-3,
% FINAL a simple algebraic argument 
% (see e.g.\ \cite[Cor. 2.2]{Peters:1991:SIM}) shows that
equality in the $G^1$ constraints implies that 
$\alpha$ is a rational function,
$\alpha =: \frac{\numer}{\denom}$.
% FINAL , with numerator of polynomial degree $\dg{\numer} \le 6$ and 
% FINAL $\dg{\denom} \le \dg{\numer}-1$ and
% FINAL such that $\numer$ and $\denom$ have no factor in common.
% FINAL In fact, for unbiased constructions,
% relative prime
% lambda of degree 
% FINAL we have a more restrictive bound on the degree.
%that $\dg{\denom} \le 1$.
In fact, we have a low bound on the degrees
of the numerator $\numer$ and the denominator $\denom$.

\begin{lemma} [$\alpha$ degree restricted]
If the two bi-3 patches $\pat{\ii}$ and $\pat{\ii-1}$ 
satisfy an unbiased $G^1$ constraint \eqref{eq:g1} then either
\begin{align}
   &\alpha^k := \frac{\numer}{\denom} \text{ is rational with } \\
   %\quad
   %\dg{\numer} \le 2, \dg{\denom}\le 1
   \notag
   &(\dg{\numer},\dg{\denom}) \in \{(2,1),(2,0),(1,1),(1,0),(0,1),(0,0)\}
   \\
   &\text{ and }\quad
%   \dg{\partial_1\pat{\ii}(u,0)} \le 2,
   % vector-valued
   \partial_1\pat{\ii}(u,0) = \lin(u)\denom(u),
   %\quad
   \dg{\lin} \le 2-\dg{\denom}
\end{align}
or the boundary $\pat{\ii}(u,0)$ is forced to have a straight segment.
\label{lem:alpha_rational}
\end{lemma}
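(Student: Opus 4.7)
The plan is to recast the $G^1$ identity \eqref{eq:g1} as a one-variable vector polynomial relation and then to read off the admissible $(\dg{\numer},\dg{\denom})$ by an elementary coprime factorization. Set
\begin{equation*}
   P(u) := \partial_1\pat{\ii}(u,0), \qquad
   Q(u) := \partial_2\pat{\ii}(u,0) + \partial_1\pat{\ii-1}(0,u).
\end{equation*}
Since each patch is bi-$3$, $P$ is a vector polynomial of degree at most $2$ and $Q$ a vector polynomial of degree at most $3$, and \eqref{eq:g1} reads $Q(u) = \alpha(u)\,P(u)$. Because $P$ is not identically zero (by the corner regularity $P(0)=\tn^\ii\ne 0$), this forces $\alpha$ to be scalar-rational, which I write in lowest terms as $\alpha = \numer/\denom$ with $\gcd(\numer,\denom)=1$.

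The heart of the argument is to factor $\denom$ out of $P$. Multiplying through by $\denom$ gives the componentwise polynomial identity $\numer P_j = \denom Q_j$ in $\R[u]$, and coprimality of $\numer$ with $\denom$ forces $\denom \mid P_j$ for each component $j$. Hence $P = \denom\,\lin$ for some vector polynomial $\lin$, and substitution yields $Q = \numer\,\lin$. Taking degrees gives the two inequalities
\begin{equation*}
   \dg{\denom} + \dg{\lin} = \dg{P} \le 2, \qquad
   \dg{\numer} + \dg{\lin} = \dg{Q} \le 3.
\end{equation*}

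I would then case-split on $\dg{\lin}$. If $\dg{\lin}\ge 1$, the first inequality forces $\dg{\denom}\le 1$ and the second forces $\dg{\numer}\le 2$, exactly the enumerated set of pairs, and the bound $\dg{\lin}\le 2-\dg{\denom}$ is built into the factorization. If instead $\dg{\lin}=0$, then $\lin\in\R^3$ is a fixed direction, so $\partial_1\pat{\ii}(u,0) = \denom(u)\,\lin$ is always collinear with $\lin$; integrating along $u$ shows that the boundary curve $\pat{\ii}(u,0)$ traces a straight-line segment through $\pat{\ii}(0,0)$, which is the escape clause of the lemma.

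The main technical obstacle is the divisibility step: one has to check that coprimality of the scalar polynomials $\numer$ and $\denom$ genuinely transfers to componentwise divisibility of the vector polynomial $P$, and that no degenerate cancellation (such as $P\equiv 0$, or leading-term collisions between $\denom$ and $\lin$) corrupts the degree bookkeeping used in the case split. The regularity of $\pat{\ii}$ at the corner rules out $P\equiv 0$, and applying the standard scalar Euclidean argument component by component handles the coprime divisibility cleanly, which is why the whole argument reduces to the two inequalities above.
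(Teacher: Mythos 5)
Your proof is correct and follows essentially the same route as the paper's: write $\alpha=\numer/\denom$ in lowest terms, use coprimality to conclude that $\denom$ divides the vector polynomial $\partial_1\pat{\ii}(u,0)$ componentwise, and then count degrees, with the straight-segment alternative arising exactly when the quotient $\lin$ is a constant direction. Your explicit case split on $\dg{\lin}$ and the pair of inequalities $\dg{\denom}+\dg{\lin}\le 2$, $\dg{\numer}+\dg{\lin}\le 3$ is in fact a slightly cleaner bookkeeping of the same argument.
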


\begin{proof}
We may assume that $\numer$ and $\denom$ are relatively coprime.
Since the left hand side
$\partial_2\pat{\ii}(u,0)+\partial_1\pat{\ii-1}(0,u)$
of the $G^1$ constraint \eqref{eq:g1} is polynomial,
$\denom(u)$ must be a (scalar) factor of 
$\partial_1 \pat{\ii}(u,0) \in \R^3$,
the (vector-valued) derivative of the boundary curve.
Unless $\pat{\ii}(u,0)$ is a line segment,
$0<\dg{\partial_1 \pat{\ii}(u,0)}\le 2$.
Consequently $\dg{\denom} \le 2$ and since $\dg{\denom} = 2$ implies that
$\partial_1 \pat{\ii}(u,0) = \bv\denom$ for a constant $\bv\in\R^3$,
$\dg{\denom}\le 1$ must hold to avoid that $\pat{\ii}(u,0)$ is a straight
segment.
Since 
$\dg{\partial_2\pat{\ii}(u,0)+\partial_1\pat{\ii-1}(0,u)}\le 3$,
also
$\dg{\partial_1 \pat{\ii}(u,0) \numer}\le 3$ and therefore 
$\dg{\numer}\le 2$.
%yielding the two cases.
%Of the two cases
%$\dg{\numer}= 2$ and $\dg{\numer}\le 1$,
%we need only exclude 
%$\dg{\numer}= 1 = \dg{\denom}$ 
%$\partial_1\pat{\ii}(u,0)$
%$\denom$ can not be higher 
\end{proof}

After scaling numerator and denominator,
we may assume that $\denom(u) := 1+\denom_1 u$.
Not linear $\alpha$ then forces a particular 
boundary curve.
%Lemma \ref{eq:alpha_rational} implies 
%\begin{equation}
   %(\dg{\numer},\dg{\denom}) \in \{(2,1),(2,0),(1,1),(1,0),(0,0)\}.
%\end{equation}
\begin{corollary} [$\alpha$ not linear restricts boundary curves]
If $(\dg{\numer},\dg{\denom}) \in \{(2,1),(2,0),(1,1),(0,1)\}$
then the corresponding degree 3 boundary curve segment 
is of the form \eqref{eq:lin_den}.
\label{cor:notlinear}
\end{corollary}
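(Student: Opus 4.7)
The plan is to read Corollary \ref{cor:notlinear} as a tightening of Lemma \ref{lem:alpha_rational}: in each of the four listed $(\dg{\numer},\dg{\denom})$-cases the factorization $\partial_1\pat{\ii}(u,0)=\lin(u)\denom(u)$ supplied by the lemma should actually have $\dg{\lin}\le 1$, which together with $\dg{\denom}\le 1$ is exactly the shape \eqref{eq:lin_den}. Three of the four cases are immediate from the lemma's bound $\dg{\lin}\le 2-\dg{\denom}$; the real work is in the single case $(2,0)$, where the lemma only gives $\dg{\lin}\le 2$ and an extra degree argument is needed.

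The key additional observation is a second degree bound coming from the $G^1$ constraint \eqref{eq:g1} itself. After substituting the factorization, its right-hand side simplifies:
\begin{equation*}
    \alpha^{\ii}(u)\,\partial_1\pat{\ii}(u,0)
    \;=\;\frac{\numer(u)}{\denom(u)}\,\lin(u)\denom(u)
    \;=\;\numer(u)\,\lin(u).
\end{equation*}
The left-hand side $\partial_2\pat{\ii}(u,0)+\partial_1\pat{\ii-1}(0,u)$ is a sum of derivatives of bicubic polynomial pieces and therefore has degree at most $3$ in $u$. Matching degrees forces $\dg{\numer}+\dg{\lin}\le 3$, i.e.\ $\dg{\lin}\le 3-\dg{\numer}$. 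Combining this with the lemma's bound gives
\begin{equation*}
    \dg{\lin}\;\le\;\min\bigl(2-\dg{\denom},\;3-\dg{\numer}\bigr).
\end{equation*}

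I would then finish by inspecting the four listed pairs: $(2,1)$ gives $\min(1,1)=1$, $(2,0)$ gives $\min(2,1)=1$, $(1,1)$ gives $\min(1,2)=1$, and $(0,1)$ gives $\min(1,3)=1$. In every case $\dg{\lin}\le 1$, and since $\dg{\denom}\le 1$ always, the boundary derivative $\crv'(u):=\partial_1\pat{\ii}(u,0)=\lin(u)\denom(u)$ is precisely of the form \eqref{eq:lin_den} with linear vector factor $\lin$ and linear scalar factor $\denom$, which is the conclusion claimed.

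The only genuine obstacle is the case $(2,0)$: here $\denom$ is a nonzero constant, so Lemma \ref{lem:alpha_rational} by itself still allows $\lin$ to be a generic degree-$2$ vector polynomial; without the extra degree-matching argument on $\numer\lin$ the corollary would fail exactly at this case. All the remaining cases are routine, and no additional assumption on the geometry (normals, symmetry, etc.) is needed — the statement is purely a polynomial-degree bookkeeping consequence of \eqref{eq:g1} and Lemma \ref{lem:alpha_rational}.
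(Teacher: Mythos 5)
Your proposal is correct and follows essentially the same route as the paper: the cases with $\dg{\denom}=1$ fall out directly from Lemma \ref{lem:alpha_rational}'s bound $\dg{\lin}\le 2-\dg{\denom}$, and the $(2,0)$ case is handled by the same degree count $\dg{\numer}+\dg{\partial_1\pat{\ii}(u,0)}\le 3$ that already appears in the lemma's proof, which the paper compresses into the phrase ``or it is linear because $\dg{\numer}=2$''. Your write-up just makes that bookkeeping explicit.
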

\begin{proof}
The derivative of the curve segment either has a linear 
factor $\denom$ or it is linear because $\dg{\numer}=2$.
\end{proof}

Lemma \ref{lem:quad} and Corollary \ref{cor:notlinear} together
imply that in general, at end points, $\alpha$ must be linear or constant
if we require more flexibility than forced straight line segments.

\begin{corollary} [$\alpha$ not linear at higher-order saddle]
If $\pat{\ii}(u,0)$ emanates from a symmetric higher-order saddle point
then $\alpha^\ii$ in the unbiased $G^1$ constraints \eqref{eq:g1}
must be linear or constant for $\pat{\ii}(u,0)$ not to be a straight segment.
\label{cor:vert_seg}
\end{corollary}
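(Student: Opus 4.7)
The plan is to argue by contrapositive: I will assume $\alpha^\ii$ is neither linear nor constant, and then show that Lemma~\ref{lem:alpha_rational}, Corollary~\ref{cor:notlinear}, and Lemma~\ref{lem:quad} chain together to force $\pat{\ii}(u,0)$ to be a straight segment. This is essentially a bookkeeping argument: all of the heavy lifting (factorization of the boundary derivative, planarity and collinearity at a symmetric saddle) has been done in the earlier results, so the task reduces to verifying that the enumerated degree pairs behave as claimed.

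First, I would invoke Lemma~\ref{lem:alpha_rational} to write $\alpha^\ii = \numer/\denom$, with $(\dg{\numer},\dg{\denom})$ belonging to the six pairs listed there (otherwise $\pat{\ii}(u,0)$ is already a straight segment, and there is nothing to prove). The pairs where $\alpha^\ii$ is a polynomial of degree at most one are exactly $(0,0)$ and $(1,0)$; these are the constant and linear cases that the corollary allows. The remaining four cases, namely $(2,1)$, $(2,0)$, $(1,1)$, and $(0,1)$, are precisely the hypotheses of Corollary~\ref{cor:notlinear}.

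Next, I would apply Corollary~\ref{cor:notlinear} to each such remaining case to conclude that the degree-three boundary segment $\pat{\ii}(u,0)$ has derivative of the factored form \eqref{eq:lin_den},
\begin{equation*}
   \partial_1 \pat{\ii}(u,0) = \lin(u)\denom(u),
   \qquad
   \dg{\lin}\le 1,\ \dg{\denom}\le 1.
\end{equation*}
Setting $\crv := \pat{\ii}(u,0)$, this is exactly the hypothesis of Lemma~\ref{lem:quad}. Since by assumption $\crv$ emanates from a symmetric higher-order saddle point $\pt := \crv(0)$, the second clause of Lemma~\ref{lem:quad} applies and forces $\crv$ to be a straight line segment.

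This contradicts the requirement that $\pat{\ii}(u,0)$ not be a straight segment, so the only admissible degree pairs are $(0,0)$ and $(1,0)$, i.e., $\alpha^\ii$ is constant or linear. The main (and only) subtlety is being sure that the enumeration of the six admissible pairs from Lemma~\ref{lem:alpha_rational} splits cleanly into the two polynomial-of-degree-$\le 1$ cases and the four cases covered by Corollary~\ref{cor:notlinear}; once that is noted, the chain of implications is immediate.
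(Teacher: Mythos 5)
Your proof is correct and follows exactly the route the paper intends: the corollary is stated without its own proof, immediately after the remark that Lemma~\ref{lem:quad} and Corollary~\ref{cor:notlinear} together yield it, and your chaining of Lemma~\ref{lem:alpha_rational} (to enumerate the degree pairs), Corollary~\ref{cor:notlinear} (to get the factored form \eqref{eq:lin_den} in the non-linear cases), and the symmetric-saddle clause of Lemma~\ref{lem:quad} is precisely that argument, with the case split between $\{(0,0),(1,0)\}$ and the remaining four pairs correctly identified.
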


\begin{figure}[h]
    \centering
    %\psfrag{b1}{$\pat{1}$}
    \psfrag{b1}[l]{$\pat{1}$}
    \psfrag{b2}{$\pat{2}$}
    \psfrag{b3}{$\pat{3}$}
    %\psfrag{b4}{$\pat{4}$}
    \psfrag{b4}[l]{$\pat{4}$}
    \psfrag{a1}{$\alpha^1$}
    \psfrag{a2}{$\alpha^2$}
    \psfrag{a3}{$\alpha^3$}
    \psfrag{s1}{$_1$}
    \psfrag{s2}{$_2$}
    \epsfig{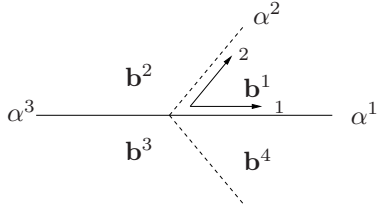}
    \caption{ (Figure \ref{fig:spl2} repeated)
    Join across an {\bf \ik } on the boundary 
    (solid) between two splines.
    The first spline has polynomial pieces $\pat{1}$ and $\pat{2}$.
    }
    \label{fig:spl22}
\end{figure}

The next lemma shows that at \ik s,
neighboring pieces of $\alpha$ constrain one another
more than just by \eqref{eq:dd} and \eqref{eq:ddd}.

\begin{lemma} [$\alpha$ not linear at single knot]
Let the segments be arranged as in Figure \ref{fig:spl22}
(the same as Figure \ref{fig:spl2}),
the \ik\ \emph{single}
and the left boundary segment 
($\pat{3}(u,0)$ shared by the two bi-3 splines)
fixed but general (in the sense that the control points
cannot be assumed to be in a particular relation).
Then $\alpha^1$ can only be not linear if 
\begin{equation}
   \al{3}{0}= 0,  \al{3}{1}=0, 
   \text{ and } 
   \al{3}{2} = \al{1}{2} 
   \ne 0
   .
\end{equation}
In particular, $\alpha^3$ must also be not linear. 
\label{lem:notlinear}
\end{lemma}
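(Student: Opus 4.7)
The plan is to combine Lemma \ref{lem:c2spline}, which applies because a single knot at the \ik\ gives $C^2$ continuity of the boundary curve across it, with the restricted form of a non-linear $\alpha^1$ supplied by Lemma \ref{lem:alpha_rational}, exploiting the generality of the fixed segment $\pat{3}(u,0)$. First I would invoke Lemma \ref{lem:c2spline} to record $\al{1}{0}=-\al{3}{0}$, $\al{1}{1}=\al{3}{1}$, and relation \eqref{eq:ddd}, noting that by $C^2$ across the single knot $\bv:=\partial^2_1\pat{1}(0,0)=\partial^2_1\pat{3}(0,0)$, so \eqref{eq:ddd} is a linear relation in $\R^3$ between $\partial^3_1\pat{1}(0,0)$, $\partial^3_1\pat{3}(0,0)$, $\bv$ and $\tn^1$.

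Next, the hypothesis that $\alpha^1$ is not linear triggers Lemma \ref{lem:alpha_rational} and splits into two subcases. If $\dg{\denom^1}=1$, I would write $\denom^1(u)=1+du$ with $d\ne 0$ and $\partial_1\pat{1}(u,0)=(\mathbf{a}+\mathbf{b}u)(1+du)$; successive differentiations at $u=0$ then give $\tn^1=\mathbf{a}$, $\bv=\mathbf{b}+d\tn^1$ and $\partial^3_1\pat{1}(0,0)=2d(\bv-d\tn^1)$. If $\dg{\denom^1}=0$, then $\alpha^1(u)=a_0+a_1 u+a_2 u^2$ with $a_2\ne 0$; comparing the $u^4$-coefficient on the two sides of the bi-3 $G^1$ relation \eqref{eq:g1}---whose left-hand side has degree at most three in $u$---forces the $u^2$-coefficient of $\partial_1\pat{1}(u,0)$ to vanish, so $\partial^3_1\pat{1}(0,0)=0$. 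In either subcase, substituting the resulting expression into \eqref{eq:ddd} rewrites it as $c_3\,\partial^3_1\pat{3}(0,0)+c_2\,\bv+c_1\,\tn^1=0$ with $c_1,c_2,c_3$ polynomial in $\al{1}{0}$, $\al{1}{1}$, $\al{1}{2}-\al{3}{2}$ (and $d$).

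Because $\pat{3}(u,0)$ is a general cubic, the three vectors $\tn^1$, $\bv$ and $\partial^3_1\pat{3}(0,0)$ are linearly independent in $\R^3$, so each $c_i$ must vanish: $c_3=0$ gives $\al{1}{0}=0$, then $c_2=0$ gives $\al{1}{1}=0$, and finally $c_1=0$ gives $\al{1}{2}=\al{3}{2}$. Once $\al{1}{0}=\al{1}{1}=0$, the non-linearity of $\alpha^1$ forces $a_2\ne 0$ in either subcase, so $\al{1}{2}=2a_2\ne 0$. Combined with Lemma \ref{lem:c2spline}'s relations $\al{1}{0}=-\al{3}{0}$ and $\al{1}{1}=\al{3}{1}$, this yields $\al{3}{0}=0$, $\al{3}{1}=0$ and $\al{3}{2}=\al{1}{2}\ne 0$; in particular, having vanishing zero- and first-order but non-zero second-order Taylor coefficients at $0$, $\alpha^3$ cannot be linear either. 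The main obstacle will be to pin down the ``general'' hypothesis on $\pat{3}(u,0)$ tightly enough to ensure the linear independence of $\tn^1,\bv,\partial^3_1\pat{3}(0,0)$ (which degenerates precisely for planar segments, inflections, or $\partial^3_1\pat{3}(0,0)$ lying in the osculating plane), and to dispatch the $\dg{\denom^1}=0$ subcase cleanly via the bi-3 degree bound on the $G^1$ equation.
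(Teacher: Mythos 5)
Your proposal is correct and follows essentially the same route as the paper: invoke Lemma \ref{lem:alpha_rational} to factor $\partial_1\pat{1}(u,0)=\lin(u)\denom(u)$, substitute the resulting expression for $\partial^3_1\pat{1}(0,0)$ into \eqref{eq:ddd}, and use the generic linear independence of $\tn^3$, $\partial^2_1\pat{3}(0,0)$, $\partial^3_1\pat{3}(0,0)$ to force the coefficients to vanish, finishing with the observation that $\al{1}{0}=\al{1}{1}=0$ plus nonlinearity of $\alpha^1$ forces $\al{1}{2}=\numer''(0)\ne 0$. Your split into the $\dg{\denom}=0$ and $\dg{\denom}=1$ subcases is just the paper's single computation with $\denom_1$ possibly zero, so the two arguments coincide.
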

\begin{proof}
If $\alpha := \alpha^1$ is not linear 
then Lemma \ref{lem:alpha_rational} implies
$
   (\dg{\numer},\dg{\denom}) \in \{(2,1),(2,0),(1,1),(0,1)\}
$
and therefore $\partial_1\pat{1}(u,0) :=  \lin(u)\denom(u)$,
a linear vector-valued polynomial times the scalar
(possibly constant) factor $\denom(u) := 1+\denom_1 u$.
By \eqref{eq:al10} and \eqref{eq:al11} and the $C^2$ constraints
for the boundary curve, constraint \eqref{eq:ddd} becomes
\begin{equation}
   \text{ at } (0,0),\quad
   0 =
   %\al{3}{0} ( \partial^3_1 \pat{3} - \denom_1\partial^2_1 \pat{3})
   \al{3}{0} ( \underbrace{\partial^3_1 \pat{3} - 2\denom_1\lin'(0)}_{=:\bv})
   + 4 \al{3}{1} \partial^2_1 \pat{3} 
   + (\al{3}{2} - \al{1}{2})\tn^3
   .
   \label{eq:b3}
\end{equation}
By $C^1$ continuity
$\lin(0)\denom(0) = \lin(0) = -\tn^3$ and hence the $C^2$ constraint
$\partial^2_1 \pat{3} = 
\lin(0)\denom_1 + \lin'(0) = -\tn^3\denom_1 + \lin'(0)$
implies
\begin{equation}
   \lin'(0) = \tn^3\denom_1+\partial^2_1 \pat{3}(0,0)
   .
\end{equation}
Therefore, at $(0,0)$,
$\bv = \partial^3_1 \pat{3} - 2\denom_1(\tn^3\denom_1+\partial^2_1
\pat{3})$.
Since, in general,
$\partial^3_1 \pat{3}(0,0)$,
$\partial^2_1 \pat{3}(0,0)$ and $\tn^3$ are linearly independent,
the scalar $\denom_1$ can not
force $\bv=0$ (recall that $\pat{3}$ is fixed),
and since 
$\bv$, $\partial^2_1 \pat{3}(0,0)$ and $\tn^3$ are linearly independent,
we must have $\al{3}{0}=0$ and $\al{3}{1}=0$ 
and $\al{1}{2}= \al{3}{2}$ in order for \eqref{eq:b3} to hold.

If $\alpha^3$ is linear then $\al{3}{2}=0$ and since
$\alpha''(0) = \left(\frac{\numer}{\denom}\right)''(0) = \numer''(0)$
when $\alpha(0)=\alpha'(0)=0$
(note that $\denom(0)=1$ and hence $\numer(0)=\numer'(0)=0$),
we have $\alpha^1\equiv0$ contradicting
the assumption that $\alpha^1$ is not linear.
\end{proof}

We now have all the pieces in place to prove the main theorem
of smooth surface construction with bi-3 splines.

\begin{theorem} [two double \ik s needed]
In general, using splines of degree bi-3 for
a \vloc\ unbiased $G^1$ construction 
without forced linear boundary segments 
requires the splines to have at least two internal double knots.  
\label{thm:twodblknt}
\end{theorem}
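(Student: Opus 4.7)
The plan is to combine the three restrictions on $\alpha$ already established along a boundary edge --- Corollary \ref{cor:vert_seg} at the endpoint vertices, Lemma \ref{lem:vloc} in the interior, and Lemma \ref{lem:notlinear} across single \ik s --- to force at least two knots of multiplicity $\geq 2$ per edge.

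Consider one boundary edge of a bi-3 spline patch running between two vertices $v_-$ and $v_+$. Along this edge the reparametrization $\alpha$ decomposes into rational pieces $\alpha_j$, one per knot interval of the two bi-3 splines meeting along the edge. In general, I may instantiate both $v_-$ and $v_+$ as symmetric higher-order saddle points (the \vloc\ framework of Section \ref{subsec:local} permits the second-order Taylor data at each vertex to be chosen freely and independently). Corollary \ref{cor:vert_seg} then forces the pieces of $\alpha$ adjacent to $v_-$ and to $v_+$ to be linear or constant, for otherwise the boundary curve would carry a forced straight segment, a possibility explicitly disqualified by the theorem hypothesis.

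At the same time, Lemma \ref{lem:vloc} rules out the possibility that \emph{every} piece of $\alpha$ along the edge is linear: no \vloc\ construction can make $\alpha$ piecewise linear everywhere. Hence some interior piece $\alpha_{j^*}$ must be non-linear (rational with $\dg{\numer}+\dg{\denom}\ge 2$). Combining this with the forced linearity at the two endpoints, $\alpha$ must change from a linear piece to a non-linear piece at least twice --- once between the linear region near $v_-$ and the non-linear interior region, and once between the non-linear interior region and the linear region near $v_+$.

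Finally, I argue that each such transition must occur at a knot of multiplicity at least two. Across a single interior knot the two bi-3 splines join $C^2$, so Lemma \ref{lem:notlinear} applies to the two pieces $\alpha_j$ and $\alpha_{j+1}$ meeting there: if one side is non-linear then the other must be non-linear as well (with vanishing $\al{\cdot}{0}$ and $\al{\cdot}{1}$ and a matching non-zero second derivative). Hence a single knot \emph{cannot} separate a linear piece of $\alpha$ from a non-linear one; the transition knot must be (at least) double, so that the splines join only $C^1$ and the $C^2$ hypothesis of Lemma \ref{lem:notlinear} no longer applies. Two transitions therefore require two internal double \ik s, which is the claim. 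The main obstacle is the appeal to genericity in the second step: one has to check that the ``in general'' clause really does realize a symmetric higher-order saddle simultaneously at both endpoints of the edge, but this is precisely what the \vloc\ hypothesis and the free choice of second-order Taylor data deliver.
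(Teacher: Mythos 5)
Your endpoint step (instantiating symmetric higher-order saddles at both ends so that Corollary \ref{cor:vert_seg} forces the first and last pieces of $\alpha$ to be linear) and your final step (using Lemma \ref{lem:notlinear} to show that a linear piece of $\alpha$ cannot abut a non-linear piece across a \emph{single} \ik, so each linear-to-non-linear transition must sit at a double knot) both match the paper. The gap is in the middle step. Lemma \ref{lem:vloc} does not assert that some piece of $\alpha$ \emph{along the particular edge under consideration} must be non-linear; it asserts that $\alpha$ cannot be at most linear \emph{everywhere in the complex}, and its proof is a global propagation of X-configurations that only yields a contradiction once it reaches a vertex of valence $\ne 4$. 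On a single edge there is no obstruction to all pieces of $\alpha$ being linear (indeed $\alpha\equiv 0$ works on every edge of a locally regular region), so your inference ``hence some interior piece $\alpha_{j^*}$ must be non-linear'' does not follow from that lemma, and without it the two double knots never materialize on the edge you are analyzing.

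What the paper uses instead is a degrees-of-freedom count local to the edge: in the base cases (two single knots, or one double knot) the bi-3 boundary curve is already uniquely determined by the second-order Taylor data at the two endpoints, so the third-order condition \eqref{eq:ddd} at the interior \ik s is, for generic endpoint data, unsatisfiable; and inserting a further single knot with linear $\alpha_j$ gains exactly one free B-spline control point, which is consumed by the new instance of \eqref{eq:ddd} at that very knot, since the two coefficients of a linear $\alpha_j$ are already forced by \eqref{eq:al10} and \eqref{eq:al11} from the neighboring piece. Only this counting argument forces a non-linear $\alpha_j$ on \emph{every} edge for generic data, and it also supplies the hypothesis of Lemma \ref{lem:notlinear} that the segment preceding $\alpha_j$ is ``fixed but general'', which you invoke without justification. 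As written, your argument shows at best that two double knots are needed somewhere in the mesh rather than per edge, and it leaves open the escape route of adding more single knots to gain flexibility --- precisely the possibility the paper's counting step is there to close.
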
 
\begin{proof}
In general, if the boundary curve has only 
a single 1-fold knot (hence two $C^2$-connected segments)
there are not enough degrees of freedom to enforce $C^2$ continuity
of the piecewise curve.
If there are two 1-fold knots (three $C^2$-connected segments),
$C^2$ continuity uniquely determines all boundary coefficients.
If there is one 2-fold knot (two $C^1$-connected segments),
$C^1$ continuity uniquely determines all boundary coefficients.
However, in these last two cases, \eqref{eq:ddd} is unresolved 
at the (two, respectively one) \ik s $\{\tau_i\}$ and
therefore, in general, these base cases allow for constructing a 
$C^2$ boundary curve but not for enforcing \eqref{eq:g1}.

Inserting one additional \ik\ that is 1-fold 
creates one additional boundary curve segment $j$
of degree 3 constrained by four vector-valued
constraints: the parametric $C^0$, $C^1$ and $C^2$ constraints
plus \eqref{eq:ddd}
or, equivalently, one free spline control point subject to \eqref{eq:ddd}.
If $\alpha_j$ is linear, its two coefficients are determined
via \eqref{eq:al10} and \eqref{eq:al11} by those of the neighbor segment,
and therefore the free (B-spline) control point
must be used to resolve \eqref{eq:ddd}.
That is, if $\alpha_j$ is linear,
we do not gain degrees of freedom that would enable
enforcing \eqref{eq:ddd}
at the \ik s $\{\tau_i\}$ of the base case.

% KEEP this comment = reasoning!!!
%In any case, Lemma \ref{lem:vloc} implies that at least one segment
%$j$ must have $\alpha_j$ not linear.  DON'T WANT THIS SINCE
% symmetric argument at the other end
%\jorg{applies only in the case that Lemma \ref{lem:vloc}
%does not apply - $\val=4$}

By Corollary \ref{cor:vert_seg}, the starting segment's
$\alpha_0$ can be assumed to be linear.
Let $\alpha_j$ be not linear while $\alpha_l$, $l=0,\ldots,j-1$,
$j\ge 1$, are linear. 
By the reasoning of the previous paragraph
all $\pat{l}(u,0)$, $l=0,\ldots,j-1$ are determined
so that Lemma \ref{lem:notlinear} applies:
% KEEP this comment = reasoning!!!
%\jorg{But Lemma \ref{lem:notlinear} requires that all of 
%$\pat{3}(u,0)$ is fixed; we only have three coefficients
%fixed.  This must be dealt with by whole-edge continuity.}
that is, $\alpha_j$ can only be not linear 
if there is at least by one double knot
between some segment $\pat{l-1}(u,0)$ and $\pat{l}(u,0)$. 

The symmetric argument at the other end implies the claim.
%separated from each original vertex by a double knot.
\end{proof}

The proof of Theorem \ref{thm:twodblknt} reveals slightly more than 
its claim: the interior segment with $\alpha_j$ not linear
must be separated by double knots from either end segment.
The simplest such construction is then based on
three segments with the middle segment bracketed by two double knots,
and such that 
$\alpha_0$ and $\alpha_2$ are linear and $\alpha_1$ quadratic
(see \Fig{fig:spl2}, \IR).

\begin{corollary}[lower bound is sharp]
The construction in \cite{Fan:2008:SBS} uses the fewest
knots when creating a smooth surface without forced linear segments
with one bi-3 spline associated with each quad of a general quad mesh.
\label{cor:opt}
\end{corollary}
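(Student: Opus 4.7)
The plan is to show that the construction of \cite{Fan:2008:SBS} meets both sides of the bound established in Theorem \ref{thm:twodblknt}: the lower bound (at least two internal double knots per edge) holds in general, and the construction realizes exactly this minimum. So the corollary is essentially the combination of the lower bound just proved with an explicit re-reading of the existing construction as a spline.

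First I would invoke Theorem \ref{thm:twodblknt} directly: any \vloc\ unbiased $G^1$ construction using bi-3 splines and avoiding forced linear boundary segments must place at least two internal double knots on every edge. This takes care of the "no fewer knots" direction without further work.

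Next I would address the achievability direction by re-interpreting \cite{Fan:2008:SBS} as a spline construction and verifying the count. The construction there is presented as smoothly connected B\'ezier patches per quad; the key step is to examine the internal $C^1$ (rather than $C^2$) transitions between the three bi-3 B\'ezier strips that share each edge and observe that the knot multiplicities forced by these transitions are exactly one pair of double knots on each side of the middle segment, i.e.\ two internal double knots per edge (see \Fig{fig:pccm}, \IR). I would then check that the reparametrization function produced by the construction respects the structure predicted by the proof of Theorem \ref{thm:twodblknt}: $\alpha_0$ and $\alpha_2$ linear on the outer segments (compatible with Corollary \ref{cor:vert_seg}), and $\alpha_1$ not linear (quadratic) on the middle segment, bracketed by the two double knots as required by Lemma \ref{lem:notlinear}.

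Finally I would note that the construction is indeed \vloc\ and unbiased, and produces surfaces with no forced linear boundary segments (otherwise the defect of Lemma \ref{lem:quad} would appear), so it falls within the class governed by Theorem \ref{thm:twodblknt}. Since it attains the lower bound of that theorem, it uses the fewest possible knots, proving the corollary. The only step that requires genuine care rather than citation is the translation from the patch-based description in \cite{Fan:2008:SBS} to the spline-with-prescribed-knot-multiplicity viewpoint; everything else is bookkeeping against results already established in the paper.
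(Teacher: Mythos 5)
Your proposal is correct and follows essentially the same route as the paper: it cites Theorem \ref{thm:twodblknt} for the lower bound and then re-reads the $3\times3$ arrangement of parametrically $C^1$-connected bi-3 B\'ezier pieces in \cite{Fan:2008:SBS} as a spline with exactly two 2-fold internal knots per edge, with linear $\alpha_0,\alpha_2$ and quadratic $\alpha_1$ avoiding the forced-straight-segment defect of Lemma \ref{lem:quad}. The paper's own proof is just a more compact statement of the same two observations, so no further changes are needed.
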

\begin{proof}
By covering each quad with a $3\times3$ arrangement of 
parametrically $C^1$-connected bi-3 patches in Bernstein-B\'ezier-form,
the construction in \cite{Fan:2008:SBS} uses exactly two \ik s,
both 2-fold. By its choice of quadratic $\alpha_1$
just for the $G^1$ constraints across the middle segment
and linear $\alpha_0$ and $\alpha_2$ for the end segments,
it does not have the shape problem
characterized by Lemma \ref{lem:quad}.
\end{proof}

\begin{figure}[h]
    \centering
    \epsfig{file=\CAGDfig/tricube.eps,width=0.4\wid}
    \hskip 0.1\wid
    \raisebox{0.5cm}{
    \epsfig{file=\CAGDfig/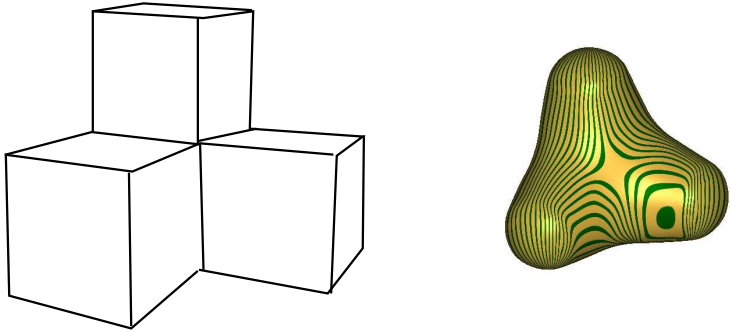,width=0.3\wid}}
    \caption{
    {\bf No Shape defect} (no forced straight line segments) 
    in a higher order saddle (cf.\ Figure \ref{fig:defect}).
    %constructed by \cite{Fan:2008:SBS}.
    }
    \label{fig:nodefect}
\end{figure}

\noindent

%Now recall that \vloc\ construction implies in general 
%that position, first and second derivative are specified at 
%each endpoint of a (piecewise) $G^1$ join between splines.

%\input{speccase}

\section{Discussion and Conclusion}
\label{sec:discuss}
%
%Using the Justification (page \pageref{justif})
Remarkably, the results in 
Section \ref{sec:g1} do not depend on the degree or 
even the polynomial nature of splines, but 
assume only sufficiently smooth functions that are 
piecewise with smooth transitions between the pieces.
In particular, the results apply to finite refinement 
by subdivision which creates parametrically smooth transitions
within each \genspl.
The extension to \genspl s mapping
to $\R^d$, $d>3$ is straightforward.
% and also the definition of $\Box$ can be extended to 
%a richer class of non-polygonal domains.
%For unbiased, logically symmetric, \vloc\ constructions,
%we derived representation-indpeendent constraints on the boundary
%curves of patches in a smooth surface complex.

For bi-3 splines these general constraints
imply a lower bound on the number and distribution of knots.
The construction in \cite{Fan:2008:SBS} shows the
lower bound to be tight.
%Conversely, Corollary \ref{cor:opt} %shows that 
%a tight upper bound on the number of polynomial pieces
%when one bi-3 spline is associated with each quad.

The results extend to constructions based on $G^1$ transitions
of the form
$
\beta^\ii(u) \partial_2\pat{\ii}(u,0)
+
\gamma^\ii(u) \partial_1\pat{\ii-1}(0,u)
=
\alpha^{\ii}(u) \partial_1 \pat{\ii}(u,0) 
$
for which there is a sufficiently rich set of 
input data that imply $\beta=\gamma$.
%Since an algorithm must work in general, i.e.\ for all input data,
For example, if $(\alpha^{\ii},\beta^\ii,\gamma^\ii)$
reflect the local geometric distribution of the input data, 
any locally symmetric input yields $\beta=\gamma$
and the results of the paper hold.
%from lower bounds geometry-dependent 
%From this special unbiased setting,
%we conclude a degree lower bound for all 
%the (derivatives of the) reparameterization 
%$(\alpha^{\ii},\beta^\ii,\gamma^\ii)$  depends on the 
%geometry of the input data
%(and not on the indexing or ordering of patches):
%dependent choices of reparametrization
%with scalar functions
%it must also work for input whose symmetries suggest an unbiased
%treatment.
%and no bias based on the geometric interpretation.
%
%The results were derived under the assumption of genericity 
%of geometry and connectivity and correspondingly, 
%unbiased logically symmetric constructions that are local.

%(see also \cite{Peters:2000:MPCC}).
%A reformulation in terms of NURBS patches should make 
%this option attractive.
%We can sidestep the assumption 
%by assuming a more restricted patch layout, or even partition
%the neighborhood and allow for $G^1$ transitions and low degree

\medskip
The bounds provide a checklist for constructions.
%that are otherwise difficult to verify.
Theorem \ref{thm:twodblknt} 
implies for example that there is a subtle error 
in the proof of the non-trivial construction \cite{conf/gmp/HahmannBC08}
which uses one double \ik\ only: the construction
falls foul of Corollary \ref{cor:valfour}. 
Such a $2\times2$ split construction can only succeed in special cases.
Choosing generic input data and $\val^1=\val^2=\val^3=4$ 
but $\val^4=3$ shows the problem.
As a second example, Lemma \ref{lem:vloc} prevents a \vloc\ solution
with all $\alpha_{j}$ linear.
When this lemma is specialized by fixing the degree to be 3,
by increasing the patch continuity to $C^2$ and by choosing 
$\alpha_{0j} := \frac{q-j}{q}\alpha_{00} + \frac{j}{q}\alpha_{0q}$
then it yields a proof of the claim 
\cite[Thm 3.1]{journals/cagd/ShiWWL04}.
%,but not the proof sketch,
%That is, if we insist on single internal knots everywhere,
%a subdivision-like construction 
%as proposed in \cite{journals/cagd/ShiWWL04}
%is justified by Lemma \ref{lem:vloc}.
%A subtle point is that the surfaces constructed here are not 
%convergent $G^1$ as the title indicates
(In light of \eqref{eq:ddd}, we might adjust the 
titles of \cite{journals/cagd/ShiWWL04} and\cite{shi2006rcg} since 
we cannot have $G^1$ surfaces when adding single knots.)
% shi2006rcg
% ---> CAGD/XShi06
%\jorg{
%Title incorrect since surfaces are not $G^1$.
%Otherwise adding single knots does not 
%improve the flexibility of the boundary curve by (14)
%}

\medskip
When we restrict connectivity,
i.e.\ drop the assumption made at the outset
that the construction applies to general input 
and uses one tensor-product spline per quad,
then constructions with fewer \ik s are
possible.
For restricted connectivity, it is well known that if all valences
are odd or tangents are in 
an X configuration, then vertex-enclosure does not impose constraints
and simple B\'ezier constructions are possible
(e.g.\ \cite{Wijk:1986:BPA,Peters:1991:SIM,Gregory:1994:FPH}).
If $n^0=n^1$ always holds, say when smoothing a cube,
then we can choose linear $\alpha^1$
and $\alpha^3$ with $\al{1}{1}=\al{3}{1}$ and $\al{1}{0}=0$
to enforce \eqref{eq:dd}.
That is, a construction with one double \ik\ is possible.
Such a construction, covering a quad by $2\times2$ bi-3 patches,
is proposed in \cite{conf/gmp/HahmannBC08}.
A similar but dual, spline-like construction appears in \cite{Zhao:1995:RBS}.
%internal double knot at a 2x2-split construction that is ruled out 
%in general by Theorem \ref{thm:twodblknt}.
%%for the general case, although, as mentioned above, such a 2x2-split
%construction can succeed in special cases,
%such as smoothing a cube.
%
Global constructions, singular parameterization,
or control of the valence, for example by splitting patches, 
% and relaxing the assumption of parameteric continuity
% nice: covered by less local: interpret break points as new
% original points and theorems apply
can allow for structurally or degree-wise simpler constructions, e.g.\
\cite{journals/cagd/Reif95a},
\cite[9.11]{Prau:BBT:2002}, \cite{Peters:1991:SIM,Peters:1995:SS}.

If we allow higher degree, then general
constructions of smooth surfaces with one patch per quad
are shown possible for degree bi-5, for example 
\cite{Myles:2008:GCQ}. For degree bi-4, a single knot (a 2x2-split)
must be introduced (see e.g.\ \cite{Peters:1995:BSS}).

The case of several $G^1$-connected patches per quad 
still awaits full investigation, as does the case of rational
bi-3 patches and the generalization of the problem to unbiased 
$G^k$ transitions for $k>1$.

% G1 continuous conditions of biquartic B-spline surfaces
% Shi Yu Wang
%Xiquan ShiCorresponding Author Contact Information, E-mail The
%Corresponding Author, a, Piqiang YuE-mail The Corresponding Author, a and
%Tianjun Wangb

%This lemma complements our understanding that we cannot always
%create a piecewise affine atlas for a $C^1$ manifold.

%Here the regular case $n=4$ yields the arguments
%that restrict our choices.

% C^2 subtle shape constraints.

%Gu single knot no freedom

%\cite{deboor87a} subtle semi-local, 
% more $\alpha_{0j}$ pieces than spline pieces?
% makes no sense since the pieces of an alpha_{0j}
% would need to be C^1 and hence form a single linear segment 

% acknowledgement
%This work was supported by the National Science Foundation Grant 0728797.

\bibliographystyle{alpha}
%\bibliography{p,peters-jorg}
\bibliography{p}

\end{document}